\newtheorem{definition}{Definition}
\newtheorem{theorem}{Theorem}
\newtheorem{lemma}[theorem]{Lemma}
\newtheorem*{convergence-thm}{Theorem}
\newif\ifincludeappendix
\newcommand{\evalto}{\;\Longrightarrow\;}
\newcommand{\placeholder}{%
  \makebox[0.7em]{%
    \kern.07em
    \vrule height.3ex
    \hrulefill
    \vrule height.3ex
    \kern.07em
  }%
}
\begin{document}
\sloppy
\title{A Conflict-Free Replicated JSON Datatype}
\author{Martin Kleppmann and Alastair R. Beresford
\thanks{M. Kleppmann and A.R. Beresford are with the University of Cambridge Computer Laboratory, Cambridge, UK.\protect\\Email: \url{mk428@cl.cam.ac.uk}, \url{arb33@cl.cam.ac.uk}.}}

\IEEEtitleabstractindextext{%
\begin{abstract}
Many applications model their data in a general-purpose storage format such as JSON. This data structure is modified by the application as a result of user input. Such modifications are well understood if performed sequentially on a single copy of the data, but if the data is replicated and modified concurrently on multiple devices, it is unclear what the semantics should be. In this paper we present an algorithm and formal semantics for a JSON data structure that automatically resolves concurrent modifications such that no updates are lost, and such that all replicas converge towards the same state (a conflict-free replicated datatype or CRDT). It supports arbitrarily nested list and map types, which can be modified by insertion, deletion and assignment. The algorithm performs all merging client-side and does not depend on ordering guarantees from the network, making it suitable for deployment on mobile devices with poor network connectivity, in peer-to-peer networks, and in messaging systems with end-to-end encryption.
\end{abstract}

\begin{IEEEkeywords}
CRDTs, Collaborative Editing, P2P, JSON, Optimistic Replication, Operational Semantics, Eventual Consistency.
\end{IEEEkeywords}}
\maketitle

\IEEEraisesectionheading{\section{Introduction}\label{sec:introduction}}

\IEEEPARstart{U}{sers} of mobile devices, such as smartphones, expect applications to continue working while the device is offline or has poor network connectivity, and to synchronize its state with the user's other devices when the network is available. Examples of such applications include calendars, address books, note-taking tools, to-do lists, and password managers. Similarly, collaborative work often requires several people to simultaneously edit the same text document, spreadsheet, presentation, graphic, and other kinds of document, with each person's edits reflected on the other collaborators' copies of the document with minimal delay.

What these applications have in common is that the application state needs to be replicated to several devices, each of which may modify the state locally. The traditional approach to concurrency control, serializability, would cause the application to become unusable at times of poor network connectivity~\cite{Davidson:1985hv}. If we require that applications work regardless of network availability, we must assume that users can make arbitrary modifications concurrently on different devices, and that any resulting conflicts must be resolved.

The simplest way to resolve conflicts is to discard some modifications when a conflict occurs, for example using a ``last writer wins'' policy. However, this approach is undesirable as it incurs data loss. An alternative is to let the user manually resolve the conflict, which is tedious and error-prone, and therefore should be avoided whenever possible.

Current applications solve this problem with a range of ad-hoc and application-specific mechanisms. In this paper we present a general-purpose datatype that provides the full expressiveness of the JSON data model, and supports concurrent modifications without loss of information. As we shall see later, our approach typically supports the automatic merging of concurrent modifications into a JSON data structure. We introduce a single, general mechanism (a multi-value register) into our model to record conflicting updates to leaf nodes in the JSON data structure. This mechanism then provides a consistent basis on which applications can resolve any remaining conflicts through programmatic means, or via further user input.  We expect that implementations of this datatype will drastically simplify the development of collaborative and state-synchronizing applications for mobile devices.

\subsection{JSON Data Model}

JSON is a popular general-purpose data encoding format, used in many databases and web services. It has similarities to XML, and we compare them in Section~\ref{sec:json-xml}. The structure of a JSON document can optionally be constrained by a schema; however, for simplicity, this paper discusses only untyped JSON without an explicit schema.

A JSON document is a tree containing two types of branch node:

\begin{description}
\item[Map:] A node whose children have no defined order, and where each child is labelled with a string \emph{key}. A key uniquely identifies one of the children. We treat keys as immutable, but values as mutable, and key-value mappings can be added and removed from the map. A JSON map is also known as an \emph{object}.
\item[List:] A node whose children have an order defined by the application. The list can be mutated by inserting or deleting list elements. A JSON list is also known as an \emph{array}.
\end{description}

A child of a branch node can be either another branch node, or a leaf node. A leaf of the tree contains a primitive value (string, number, boolean, or null). We treat primitive values as immutable, but allow the value of a leaf node to be modified by treating it as a \emph{register} that can be assigned a new value.

This model is sufficient to express the state of a wide range of applications. For example, a text document can be represented by a list of single-character strings; character-by-character edits are then expressed as insertions and deletions of list elements. In Section~\ref{sec:examples} we describe four more complex examples of using JSON to model application data.

\subsection{Replication and Conflict Resolution}\label{sec:intro-replication}

We consider systems in which a full copy of the JSON document is replicated on several devices. Those devices could be servers in datacenters, but we focus on mobile devices such as smartphones and laptops, which have intermittent network connectivity. We do not distinguish between devices owned by the same user and different users. Our model allows each device to optimistically modify its local replica of the document, and to asynchronously propagate those edits to other replicas.

Our only requirement of the network is that messages sent by one replica are eventually delivered to all other replicas, by retrying if delivery fails. We assume the network may arbitrarily delay, reorder and duplicate messages.

Our algorithm works client-side and does not depend on any server to transform or process messages. This approach allows messages to be delivered via a peer-to-peer network as well as a secure messaging protocol with end-to-end encryption~\cite{Unger:2015kg}. The details of the network implementation and cryptographic protocols are outside of the scope of this paper.

In Section~\ref{sec:semantics} we define formal semantics describing how conflicts are resolved when a JSON document is concurrently modified on different devices. Our design is based on three simple principles:
\begin{enumerate}
\item All replicas of the data structure should automatically converge towards the same state (a requirement known as \emph{strong eventual consistency}~\cite{Shapiro:2011un}).
\item No user input should be lost due to concurrent modifications.
\item If all sequential permutations of a set of updates lead to the same state, then concurrent execution of those updates also leads to the same state~\cite{Bieniusa:2012gt}.
\end{enumerate}

\subsection{Our Contributions}

Our main contribution in this work is to define an algorithm and formal semantics for collaborative, concurrent editing of JSON data structures with automatic conflict resolution. Although similar algorithms have previously been defined for lists, maps and registers individually (see Section~\ref{sec:related}), to our knowledge this paper is the first to integrate all of these structures into an arbitrarily composable datatype that can be deployed on any network topology.

A key requirement of conflict resolution is that after any sequence of concurrent modifications, all replicas eventually converge towards the same state. In Section~\ref{sec:convergence} and the appendix we prove a theorem to show that our algorithm satisfies this requirement.

Composing maps and lists into arbitrarily nested structures opens up subtle challenges that do not arise in flat structures, due to the possibility of concurrent edits at different levels of the tree. We illustrate some of those challenges by example in Section~\ref{sec:examples}. Nested structures are an important requirement for many applications. Consequently, the long-term goal of our work is to simplify the development of applications that use optimistic replication by providing a general algorithm for conflict resolution whose details can largely be hidden inside an easy-to-use software library.

\section{Related Work}\label{sec:related}

In this section we discuss existing approaches to optimistic replication, collaborative editing and conflict resolution.

\subsection{Operational Transformation}\label{sec:related-ot}

Algorithms based on \emph{operational transformation} (OT) have long been used for collaborative editing applications~\cite{Ellis:1989ue,Ressel:1996wx,Sun:1998vf,Nichols:1995fd}. Most of them treat a document as a single ordered list (of characters, for example) and do not support nested tree structures that are required by many applications. Some algorithms generalize OT to editing XML documents~\cite{Davis:2002iv,Ignat:2003jy,Wang:2015vo}, which provides nesting of ordered lists, but these algorithms do not support key-value maps as defined in this paper (see Section~\ref{sec:json-xml}). The performance of OT algorithms degrades rapidly as the number of concurrent operations increases~\cite{Li:2006kd,Mehdi:2011ke}.

Most deployed OT collaboration systems, including Google Docs~\cite{DayRichter:2010tt}, Etherpad~\cite{Etherpad:2011um}, Novell Vibe~\cite{Spiewak:2010vw} and Apache Wave (formerly Google Wave~\cite{Wang:2015vo}), rely on a single server to decide on a total ordering of operations~\cite{Lemonik:2016wh}, a design decision inherited from the Jupiter system~\cite{Nichols:1995fd}. This approach has the advantage of making the transformation functions simpler and less error-prone~\cite{Imine:2003ks}, but it does not meet our requirements, since we want to support peer-to-peer collaboration without requiring a single server.

Many secure messaging protocols, which we plan to use for encrypted collaboration, do not guarantee that different recipients will see messages in the same order~\cite{Unger:2015kg}. Although it is possible to decide on a total ordering of operations without using a single server by using an atomic broadcast protocol~\cite{Defago:2004ji}, such protocols are equivalent to consensus~\cite{Chandra:1996cp}, so they can only safely make progress if a quorum of participants are reachable. We expect that in peer-to-peer systems of mobile devices participants will frequently be offline, and so any algorithm requiring atomic broadcast would struggle to reach a quorum and become unavailable. Without quorums, the strongest guarantee a system can give is causal ordering~\cite{Attiya:2015dm}.

The Google Realtime API~\cite{Google:2015vk} is to our knowledge the only implementation of OT that supports arbitrary nesting of lists and maps. Like Google Docs, it relies on a single server~\cite{Lemonik:2016wh}. As a proprietary product, details of its algorithms have not been published.

\subsection{CRDTs}\label{sec:related-crdts}

Conflict-free replicated datatypes (CRDTs) are a family of data structures that support concurrent modification and guarantee convergence of concurrent updates. They work by attaching additional metadata to the data structure, making modification operations commutative by construction. The JSON datatype described in this paper is a CRDT.

CRDTs for registers, counters, maps, and sets are well-known~\cite{Shapiro:2011un,Shapiro:2011wy}, and have been implemented in various deployed systems such as Riak~\cite{Brown:2014hs,Brown:2013wy}. For ordered lists, various algorithms have been proposed, including WOOT~\cite{Oster:2006wj}, RGA~\cite{Roh:2011dw}, Treedoc~\cite{Preguica:2009fz}, Logoot~\cite{Weiss:2010hx}, and LSEQ~\cite{Nedelec:2013ky}. Attiya et al.~\cite{Attiya:2016kh} analyze the metadata overhead of collaboratively edited lists, and provide a correctness proof of the RGA algorithm. However, none of them support nesting: all of the aforementioned algorithms assume that each of their elements is a primitive value, not another CRDT.

The problem of nesting one CRDT inside another (also known as \emph{composition} or \emph{embedding}) has only been studied more recently. Riak allows nesting of counters and registers inside maps, and of maps within other maps~\cite{Brown:2014hs,Brown:2013wy}. Embedding counters inside maps raises questions of semantics, which have been studied by Baquero, Almeida and Lerche~\cite{Baquero:2016iv}. Almeida et al.~\cite{Almeida:2016tk} also define delta mutations for nested maps, and Baquero et al.~\cite{Baquero:2015tm} define a theoretical framework for composition of state-based CRDTs, based on lattices. None of this work integrates CRDTs for ordered lists, but the treatment of causality in these datatypes forms a basis for the semantics developed in this paper.

Burckhardt et al.~\cite{Burckhardt:2012jy} define \emph{cloud types}, which are similar to CRDTs and can be composed. They define \emph{cloud arrays}, which behave similarly to our map datatype, and \emph{entities}, which are like unordered sets or relations; ordered lists are not defined in this framework.

On the other hand, Martin et al.~\cite{Martin:2010ih} generalize Logoot~\cite{Weiss:2010hx} to support collaborative editing of XML documents~-- that is, a tree of nested ordered lists without nested maps. As discussed in Section~\ref{sec:json-xml}, such a structure does not capture the expressiveness of JSON.

Although CRDTs for registers, maps and ordered lists have existed for years in isolation, we are not aware of any prior work that allows them all to be composed into an arbitrarily nested CRDT with a JSON-like structure.

\subsection{Other Approaches}\label{sec:related-other}

Many replicated data systems need to deal with the problem of concurrent, conflicting modifications, but the solutions are often ad-hoc. For example, in Dynamo~\cite{DeCandia:2007ui} and CouchDB, if several values are concurrently written to the same key, the database preserves all of these values, and leaves conflict resolution to application code -- in other words, the only datatype it supports is a multi-value register. Naively chosen merge functions often exhibit anomalies such as deleted items reappearing~\cite{DeCandia:2007ui}. We believe that conflict resolution is not a simple matter that can reasonably be left to application programmers.

Another frequently-used approach to conflict resolution is \emph{last writer wins} (LWW), which arbitrarily chooses one among several concurrent writes as ``winner'' and discards the others. LWW is used in Apache Cassandra, for example. It does not meet our requirements, since we want no user input to be lost due to concurrent modifications.

Resolving concurrent updates on tree structures has been studied in the context of file synchronization~\cite{Balasubramaniam:1998jh,Ramsey:2001ce}.

Finally, systems such as Bayou~\cite{Terry:1995dn} allow offline nodes to execute transactions tentatively, and confirm them when they are next online. This approach relies on all servers executing transactions in the same serial order, and deciding whether a transaction was successful depending on its preconditions. Bayou has the advantage of being able to express global invariants such as uniqueness constraints, which require serialization and cannot be expressed using CRDTs~\cite{Bailis:2014th}. Bayou's downside is that tentative transactions may be rolled back, requiring explicit handling by the application, whereas CRDTs are defined such that operations cannot fail after they have been performed on one replica.

\section{Composing Data Structures}\label{sec:composing}

In this section we informally introduce our approach to collaborative editing of JSON data structures, and illustrate some peculiarities of concurrent nested data structures. A formal presentation of the algorithm follows in Section~\ref{sec:semantics}.

\subsection{Concurrent Editing Examples}\label{sec:examples}

\begin{figure*}[p]
\centering
\begin{tikzpicture}[auto,scale=0.8]
\path [draw,dotted] (4,-0.5) -- (4,6.5);
\node (leftR)  at (0,6) {Replica $p$:};
\node (rightR) at (8,6) {Replica $q$:};
\node (left0)  at (0,5) [rectangle,draw] {\{``key'': ``A''\}};
\node (right0) at (8,5) [rectangle,draw] {\{``key'': ``A''\}};
\node (left1)  at (0,3) [rectangle,draw] {\{``key'': ``B''\}};
\node (right1) at (8,3) [rectangle,draw] {\{``key'': ``C''\}};
\node (left2)  at (0,0) [rectangle,draw] {\{``key'': \{``B'', ``C''\}\}};
\node (right2) at (8,0) [rectangle,draw] {\{``key'': \{``B'', ``C''\}\}};
\node (comms)  at (4,1.6) [text=blue] {\footnotesize network communication};
\draw [thick,->] (left0) to node [left,inner sep=8pt] {doc.get(``key'') := ``B'';} (left1);
\draw [thick,->] (right0) to node [right,inner sep=8pt] {doc.get(``key'') := ``C'';} (right1);
\draw [thick,->] (left1) -- (left2);
\draw [thick,dashed,blue,->] (left1.south)  to [out=270,in=135] (right2.north west);
\draw [thick,dashed,blue,->] (right1.south) to [out=270,in=45]  (left2.north east);
\draw [thick,->] (right1) -- (right2);
\end{tikzpicture}
\caption{Concurrent assignment to the register at doc.get(``key'') by replicas $p$ and $q$.}\label{fig:register-assign}
\end{figure*}

The sequential semantics of editing a JSON data structure are well-known, and the semantics of concurrently editing a flat map or list data structure have been thoroughly explored in the literature (see Section~\ref{sec:related}). However, when defining a CRDT for JSON data, difficulties arise due to the interactions between concurrency and nested data structures.

In the following examples we show some situations that might occur when JSON documents are concurrently modified, demonstrate how they are handled by our algorithm, and explain the rationale for our design decisions. In all examples we assume two replicas, labelled $p$ (drawn on the left-hand side) and $q$ (right-hand side). Local state for a replica is drawn in boxes, and modifications to local state are shown with labelled solid arrows; time runs down the page. Since replicas only mutate local state, we make communication of state changes between replicas explicit in our model. Network communication is depicted with dashed arrows.

Our first example is shown in Figure~\ref{fig:register-assign}. In a document that maps ``key'' to a register with value ``A'', replica $p$ sets the value of the register to ``B'', while replica $q$ concurrently sets it to ``C''. As the replicas subsequently exchange edits via network communication, they detect the conflict. Since we do not want to simply discard one of the edits, and the strings ``B'' and ``C'' cannot be meaningfully merged, the system must preserve both concurrent updates. This datatype is known as a \emph{multi-value register}: although a replica can only assign a single value to the register, reading the register may return a set of multiple values that were concurrently written.

A multi-value register is hardly an impressive CRDT, since it does not actually perform any conflict resolution. We use it only for primitive values for which no automatic merge function is defined. Other CRDTs could be substituted in its place: for example, a counter CRDT for a number that can only be incremented and decremented, or an ordered list of characters for a collaboratively editable string (see also Figure~\ref{fig:text-edit}).

\begin{figure*}[p]
\centering
\begin{tikzpicture}[auto,scale=0.8]
\path [draw,dotted] (4,-1) -- (4,8);
\node (left0)  at (0,7.5) [rectangle,draw] {\{``colors'': \{``blue'': ``\#0000ff''\}\}};
\node (right0) at (8,7.5) [rectangle,draw] {\{``colors'': \{``blue'': ``\#0000ff''\}\}};
\node [matrix] (left1) at (0,4) [rectangle,draw] {
    \node {\{``colors'': \{``blue'': ``\#0000ff'',}; \\
    \node {``red'': ``\#ff0000''\}\}}; \\
};
\node (right1) at (8,5.5) [rectangle,draw] {\{``colors'': \{\}\}};
\node (right2) at (8,3) [rectangle,draw] {\{``colors'': \{``green'': ``\#00ff00''\}\}};
\node [matrix] (left2) at (0,0) [rectangle,draw] {
    \node {\{``colors'': \{``red'': ``\#ff0000'',}; \\
    \node {``green'': ``\#00ff00''\}\}}; \\
};
\node [matrix] (right3) at (8,0) [rectangle,draw] {
    \node {\{``colors'': \{``red'': ``\#ff0000'',}; \\
    \node {``green'': ``\#00ff00''\}\}}; \\
};
\node (comms) at (4,2.1) [text=blue] {\footnotesize network communication};
\draw [thick,->] (left0)  -- (left1)
    node [left,text width=4.2cm,text centered,midway]  {doc.get(``colors'').get(``red'') := ``\#ff0000'';};
\draw [thick,->] (right0) to node [right] {doc.get(``colors'') := \{\};} (right1);
\draw [thick,->] (right1) -- (right2)
    node [right,text width=4.2cm,text centered,midway] {doc.get(``colors'').get(``green'') := ``\#00ff00'';};
\draw [thick,->] (left1)  to (left2);
\draw [thick,->] (right2) to (right3);
\draw [thick,dashed,blue,->] (left1.south)  to [out=270,in=135] (right3.north west);
\draw [thick,dashed,blue,->] (right2.south) to [out=270,in=45]  (left2.north east);
\end{tikzpicture}
\caption{Modifying the contents of a nested map while concurrently the entire map is overwritten.}\label{fig:map-remove}
\end{figure*}

\begin{figure*}[p]
\centering
\begin{tikzpicture}[auto,scale=0.8]
\path [draw,dotted] (5,-0.5) -- (5,10);
\node (left0)  at  (0,9.5) [rectangle,draw] {\{\}};
\node (right0) at (10,9.5) [rectangle,draw] {\{\}};
\node (left1)  at  (0,7.5) [rectangle,draw] {\{``grocery'': []\}};
\node (right1) at (10,7.5) [rectangle,draw] {\{``grocery'': []\}};
\node (left2)  at  (0,5.0) [rectangle,draw] {\{``grocery'': [``eggs'']\}};
\node (left3)  at  (0,2.5) [rectangle,draw] {\{``grocery'': [``eggs'', ``ham'']\}};
\node (right2) at (10,5.0) [rectangle,draw] {\{``grocery'': [``milk'']\}};
\node (right3) at (10,2.5) [rectangle,draw] {\{``grocery'': [``milk'', ``flour'']\}};
\node (left4)  at  (0,0.0) [rectangle,draw] {\{``grocery'': [``eggs'', ``ham'', ``milk'', ``flour'']\}};
\node (right4) at (10,0.0) [rectangle,draw] {\{``grocery'': [``eggs'', ``ham'', ``milk'', ``flour'']\}};
\node (comms)  at  (5,1.4) [text=blue] {\footnotesize network communication};
\draw [thick,->] (left0)  to node [left]  {doc.get(``grocery'') := [];} (left1);
\draw [thick,->] (right0) to node [right] {doc.get(``grocery'') := [];} (right1);
\draw [thick,->] (left1)  -- (left2)
    node [left,text width=4cm,text centered,midway]  {doc.get(``grocery'').idx(0) .insertAfter(``eggs'');};
\draw [thick,->] (right1) -- (right2)
    node [right,text width=4cm,text centered,midway] {doc.get(``grocery'').idx(0) .insertAfter(``milk'');};
\draw [thick,->] (left2)  -- (left3)
    node [left,text width=4cm,text centered,midway]  {doc.get(``grocery'').idx(1) .insertAfter(``ham'');};
\draw [thick,->] (right2) -- (right3)
    node [right,text width=4cm,text centered,midway] {doc.get(``grocery'').idx(1) .insertAfter(``flour'');};
\draw [thick,->] (left3)  to (left4);
\draw [thick,->] (right3) to (right4);
\draw [thick,dashed,blue,->] (left3.south)  to [out=270,in=135] (right4.north west);
\draw [thick,dashed,blue,->] (right3.south) to [out=270,in=45]  (left4.north east);
\end{tikzpicture}
\caption{Two replicas concurrently create ordered lists under the same map key.}\label{fig:two-lists}
\end{figure*}

\begin{figure*}[p]
\centering
\begin{tikzpicture}[auto,scale=0.8]
\path [draw,dotted] (4,-0.5) -- (4,8.5);
\node (leftR)  at (0,8) {Replica $p$:};
\node (rightR) at (8,8) {Replica $q$:};
\node (left1)  at (0,7) [rectangle,draw] {[``a'', ``b'', ``c'']};
\node (left2)  at (0,5) [rectangle,draw] {[``a'', ``c'']};
\node (left3)  at (0,3) [rectangle,draw] {[``a'', ``x'', ``c'']};
\node (left4)  at (0,0) [rectangle,draw] {[``y'', ``a'', ``x'', ``z'', ``c'']};
\node (right1) at (8,7) [rectangle,draw] {[``a'', ``b'', ``c'']};
\node (right2) at (8,5) [rectangle,draw] {[``y'', ``a'', ``b'', ``c'']};
\node (right3) at (8,3) [rectangle,draw] {[``y'', ``a'', ``z'', ``b'', ``c'']};
\node (right4) at (8,0) [rectangle,draw] {[``y'', ``a'', ``x'', ``z'', ``c'']};
\node (comms)  at (4, 1.5) [text=blue] {\footnotesize network communication};
\draw [thick,->] (left1)  to node [left]  {doc.idx(2).delete;} (left2);
\draw [thick,->] (left2)  to node [left]  {doc.idx(1).insertAfter(``x'');} (left3);
\draw [thick,->] (right1) to node [right] {doc.idx(0).insertAfter(``y'');} (right2);
\draw [thick,->] (right2) to node [right] {doc.idx(2).insertAfter(``z'');} (right3);
\draw [thick,->] (left3)  to (left4);
\draw [thick,->] (right3) to (right4);
\draw [thick,dashed,blue,->] (left3.south)  to [out=270,in=135] (right4.north west);
\draw [thick,dashed,blue,->] (right3.south) to [out=270,in=45]  (left4.north east);
\end{tikzpicture}
\caption{Concurrent editing of an ordered list of characters (i.e., a text document).}\label{fig:text-edit}
\end{figure*}

\begin{figure*}[p]
\centering
\begin{tikzpicture}[auto,scale=0.8]
\path [draw,dotted] (4,-1) -- (4,7.5);
\node (left1)  at (0,7) [rectangle,draw] {\{\}};
\node (left2)  at (0,5) [rectangle,draw] {\{``a'': \{\}\}};
\node (left3)  at (0,3) [rectangle,draw] {\{``a'': \{``x'': ``y''\}\}};
\node [matrix] (left4) at (0,0) [rectangle,draw] {
    \node {\{mapT(``a''): \{``x'': ``y''\},}; \\
    \node {listT(``a''): [``z'']\}}; \\
};
\node (right1) at (8,7) [rectangle,draw] {\{\}};
\node (right2) at (8,5) [rectangle,draw] {\{``a'': []\}};
\node (right3) at (8,3) [rectangle,draw] {\{``a'': [``z'']\}};
\node [matrix] (right4) at (8,0) [rectangle,draw] {
    \node {\{mapT(``a''): \{``x'': ``y''\},}; \\
    \node {listT(``a''): [``z'']\}}; \\
};
\node (comms)  at (4,2.0) [text=blue] {\footnotesize network communication};
\draw [thick,->] (left1)  to node [left]  {doc.get(``a'') := \{\};} (left2);
\draw [thick,->] (left2)  to node [left]  {doc.get(``a'').get(``x'') := ``y'';} (left3);
\draw [thick,->] (right1) to node [right] {doc.get(``a'') := [];} (right2);
\draw [thick,->] (right2) to node [right] {doc.get(``a'').idx(0).insertAfter(``z'');} (right3);
\draw [thick,->] (left3)  to (left4);
\draw [thick,->] (right3) to (right4);
\draw [thick,dashed,blue,->] (left3.south)  to [out=270,in=135] (right4.north west);
\draw [thick,dashed,blue,->] (right3.south) to [out=270,in=45]  (left4.north east);
\end{tikzpicture}
\caption{Concurrently assigning values of different types to the same map key.}\label{fig:type-clash}
\end{figure*}

\begin{figure*}[p]
\centering
\begin{tikzpicture}[auto,scale=0.8]
\path [draw,dotted] (4,-0.5) -- (4,7.0);
\node [matrix] (left0) at (0,6) [rectangle,draw] {
    \node {\{``todo'': [\{``title'': ``buy milk'',}; \\
    \node {``done'': false\}]\}}; \\
};
\node [matrix] (right0) at (8,6) [rectangle,draw] {
    \node {\{``todo'': [\{``title'': ``buy milk'',}; \\
    \node {``done'': false\}]\}}; \\
};
\node (left1)  at (0,3) [rectangle,draw] {\{``todo'': []\}};
\node [matrix] (right1) at (8,3) [rectangle,draw] {
    \node {\{``todo'': [\{``title'': ``buy milk'',}; \\
    \node {``done'': true\}]\}}; \\
};
\node (left2)  at (0,0) [rectangle,draw] {\{``todo'': [\{``done'': true\}]\}};
\node (right2) at (8,0) [rectangle,draw] {\{``todo'': [\{``done'': true\}]\}};
\node (comms)  at (4,1.6) [text=blue] {\footnotesize network communication};
\draw [thick,->] (left0)  to node [left]  {doc.get(``todo'').idx(1).delete;} (left1);
\draw [thick,->] (right0) to node [right] {doc.get(``todo'').idx(1).get(``done'') := true;} (right1);
\draw [thick,->] (left1)  to (left2);
\draw [thick,->] (right1) to (right2);
\draw [thick,dashed,blue,->] (left1.south)  to [out=270,in=135] (right2.north west);
\draw [thick,dashed,blue,->] (right1.south) to [out=270,in=45]  (left2.north east);
\end{tikzpicture}
\caption{One replica removes a list element, while another concurrently updates its contents.}\label{fig:todo-item}
\end{figure*}

Figure~\ref{fig:map-remove} gives an example of concurrent edits at different levels of the JSON tree. Here, replica $p$ adds ``red'' to a map of colors, while replica $q$ concurrently blanks out the entire map of colors and then adds ``green''. Instead of assigning an empty map, $q$ could equivalently remove the entire key ``colors'' from the outer map, and then assign a new empty map to that key. The difficulty in this example is that the addition of ``red'' occurs at a lower level of the tree, while concurrently the removal of the map of colors occurs at a higher level of the tree.

One possible way of handling such a conflict would be to let edits at higher levels of the tree always override concurrent edits within that subtree. In this case, that would mean the addition of ``red'' would be discarded, since it would be overridden by the blanking-out of the entire map of colors. However, that behavior would violate our requirement that no user input should be lost due to concurrent modifications. Instead, we define merge semantics that preserve all changes, as shown in Figure~\ref{fig:map-remove}: ``blue'' must be absent from the final map, since it was removed by blanking out the map, while ``red'' and ``green'' must be present, since they were explicitly added. This behavior matches that of CRDT maps in Riak~\cite{Brown:2014hs,Brown:2013wy}.

Figure~\ref{fig:two-lists} illustrates another problem with maps: two replicas can concurrently insert the same map key. Here, $p$ and $q$ each independently create a new shopping list under the same map key ``grocery'', and add items to the list. In the case of Figure~\ref{fig:register-assign}, concurrent assignments to the same map key were left to be resolved by the application, but in Figure~\ref{fig:two-lists}, both values are lists and so they can be merged automatically. We preserve the ordering and adjacency of items inserted at each replica, so ``ham'' appears after ``eggs'', and ``flour'' appears after ``milk'' in the merged result. There is no information on which replica's items should appear first in the merged result, so the algorithm can make an arbitrary choice between ``eggs, ham, milk, flour'' and ``milk, flour, eggs, ham'', provided that all replicas end up with the items in the same order.

Figure~\ref{fig:text-edit} shows how a collaborative text editor can be implemented, by treating the document as a list of characters. All changes are preserved in the merged result: ``y'' is inserted before ``a''; ``x'' and ``z'' are inserted between ``a'' and ``c''; and ``b'' is deleted.

Figure~\ref{fig:type-clash} demonstrates a variant of the situation in Figure~\ref{fig:two-lists}, where two replicas concurrently insert the same map key, but they do so with different datatypes as values: $p$ inserts a nested map, whereas $q$ inserts a list. These datatypes cannot be meaningfully merged, so we preserve both values separately. We do this by tagging each map key with a type annotation (\textsf{mapT}, \textsf{listT}, or \textsf{regT} for a map, list, or register value respectively), so each type inhabits a separate namespace.

Finally, Figure~\ref{fig:todo-item} shows a limitation of the principle of preserving all user input. In a to-do list application, one replica removes a to-do item from the list, while another replica concurrently marks the same item as done. As the changes are merged, the update of the map key ``done'' effectively causes the list item to be resurrected on replica $p$, leaving a to-do item without a title (since the title was deleted as part of deleting the list item). This behavior is consistent with the example in Figure~\ref{fig:map-remove}, but it is perhaps surprising. In this case it may be more desirable to discard one of the concurrent updates, and thus preserve the implicit schema that a to-do item has both a ``title'' and a ``done'' field. We leave the analysis of developer expectations and the development of a schema language for future work.

\subsection{JSON Versus XML}\label{sec:json-xml}

The most common alternative to JSON is XML, and collaborative editing of XML documents has been previously studied~\cite{Davis:2002iv,Ignat:2003jy,Wang:2015vo}. Besides the superficial syntactical differences, the tree structure of XML and JSON appears quite similar. However, there is an important difference that we should highlight.

JSON has two collection constructs that can be arbitrarily nested: maps for unordered key-value pairs, and lists for ordered sequences. In XML, the children of an element form an ordered sequence, while the attributes of an element are unordered key-value pairs. However, XML does not allow nested elements inside attributes -- the value of an attribute can only be a primitive datatype. Thus, XML supports maps within lists, but not lists within maps. In this regard, XML is less expressive than JSON: the scenarios in Figures~\ref{fig:two-lists} and~\ref{fig:type-clash} cannot occur in XML.

Some applications may attach map-like semantics to the children of an XML document, for example by interpreting the child element name as key. However, this key-value structure is not part of XML itself, and would not be enforced by existing collaborative editing algorithms for XML. If multiple children with the same key are concurrently created, existing algorithms would create duplicate children with the same key rather than merging them like in Figure~\ref{fig:two-lists}.

\subsection{Document Editing API}\label{sec:editing-api}

\begin{figure}
\centering
\begin{tabular}{rcll}
CMD & ::= & \texttt{let} $x$ \texttt{=} EXPR & $x \in \mathrm{VAR}$ \\
& $|$ & $\mathrm{EXPR}$ \texttt{:=} $v$ & $v \in \mathrm{VAL}$ \\
& $|$ & $\mathrm{EXPR}$\texttt{.insertAfter(}$v$\texttt{)} & $v \in \mathrm{VAL}$ \\
& $|$ & $\mathrm{EXPR}$\texttt{.delete} \\
& $|$ & \texttt{yield} \\
& $|$ & CMD\texttt{;} CMD \vspace{0.5em}\\
EXPR & ::= & \texttt{doc} \\
& $|$ & $x$ & $x \in \mathrm{VAR}$ \\
& $|$ & EXPR\texttt{.get(}$\mathit{key}$\texttt{)} & $\mathit{key} \in \mathrm{String}$ \\
& $|$ & EXPR\texttt{.idx(}$i$\texttt{)} & $i \ge 0$ \\
& $|$ & EXPR\texttt{.keys} \\
& $|$ & EXPR\texttt{.values} \vspace{0.5em}\\
VAR & ::= & $x$ & $x \in \mathrm{VarString}$\\
VAL & ::= & $n$ & $n \in \mathrm{Number}$ \\
& $|$ & $\mathit{str}$ & $\mathit{str} \in \mathrm{String}$ \\
& $|$ & \texttt{true} $|$ \texttt{false} $|$ \texttt{null} \\
& $|$ & \verb|{}| $|$ \verb|[]|
\end{tabular}
\caption{Syntax of command language for querying and modifying a document.}\label{fig:local-syntax}
\end{figure}

\begin{figure}
\centering
\begin{verbatim}
doc := {};
doc.get("shopping") := [];
let head = doc.get("shopping").idx(0);
head.insertAfter("eggs");
let eggs = doc.get("shopping").idx(1);
head.insertAfter("cheese");
eggs.insertAfter("milk");

// Final state:
{"shopping": ["cheese", "eggs", "milk"]}
\end{verbatim}
\caption{Example of programmatically constructing a JSON document.}\label{fig:make-doc}
\end{figure}

To define the semantics for collaboratively editable data structures, we first define a simple command language that is executed locally at any of the replicas, and which allows that replica's local copy of the document to be queried and modified. Performing read-only queries has no side-effects, but modifying the document has the effect of producing \emph{operations} describing the mutation. Those operations are immediately applied to the local copy of the document, and also enqueued for asynchronous broadcasting to other replicas.

The syntax of the command language is given in Figure~\ref{fig:local-syntax}. It is not a full programming language, but rather an API through which the document state is queried and modified. We assume that the program accepts user input and issues a (possibly infinite) sequence of commands to the API. We model only the semantics of those commands, and do not assume anything about the program in which the command language is embedded. The API differs slightly from the JSON libraries found in many programming languages, in order to allow us to define consistent merge semantics.

We first explain the language informally, before giving its formal semantics. The expression construct EXPR is used to construct a \emph{cursor} which identifies a position in the document. An expression starts with either the special token \texttt{doc}, identifying the root of the JSON document tree, or a variable $x$ that was previously defined in a \texttt{let} command. The expression defines, left to right, the path the cursor takes as it navigates through the tree towards the leaves: the operator \texttt{.get(}$\mathit{key}$\texttt{)} selects a key within a map, and \texttt{.idx(}$n$\texttt{)} selects the $n$th element of an ordered list. Lists are indexed starting from 1, and \texttt{.idx(0)} is a special cursor indicating the head of a list (a virtual position before the first list element).

The expression construct EXPR can also query the state of the document: \texttt{keys} returns the set of keys in the map at the current cursor, and \texttt{values} returns the contents of the multi-value register at the current cursor. (\texttt{values} is not defined if the cursor refers to a map or list.)

A command CMD either sets the value of a local variable (\texttt{let}), performs network communication (\texttt{yield}), or modifies the document. A document can be modified by writing to a register (the operator \texttt{:=} assigns the value of the register identified by the cursor), by inserting an element into a list (\texttt{insertAfter} places a new element after the existing list element identified by the cursor, and \texttt{.idx(0).insertAfter} inserts at the head of a list), or by deleting an element from a list or a map (\texttt{delete} removes the element identified by the cursor).

Figure~\ref{fig:make-doc} shows an example sequence of commands that constructs a new document representing a shopping list. First \texttt{doc} is set to \verb|{}|, the empty map literal, and then the key \verb|"shopping"| within that map is set to the empty list \verb|[]|. The third line navigates to the key \verb|"shopping"| and selects the head of the list, placing the cursor in a variable called \texttt{head}. The list element ``eggs'' is inserted at the head of the list. In line 5, the variable \texttt{eggs} is set to a cursor pointing at the list element ``eggs''. Then two more list elements are inserted: ``cheese'' at the head, and ``milk'' after ``eggs''.

Note that the cursor \texttt{eggs} identifies the list element by identity, not by its index: after the insertion of ``cheese'', the element ``eggs'' moves from index 1 to 2, but ``milk'' is nevertheless inserted after ``eggs''. As we shall see later, this feature is helpful for achieving desirable semantics in the presence of concurrent modifications.

\begin{figure*}
\begin{center}
\AxiomC{$\mathit{cmd}_1 \mathbin{:} \mathrm{CMD}$}
\AxiomC{$A_p,\, \mathit{cmd}_1 \evalto A_p'$}
\LeftLabel{\textsc{Exec}}
\BinaryInfC{$A_p,\, \langle \mathit{cmd}_1 \mathbin{;} \mathit{cmd}_2 \mathbin{;} \dots \rangle
    \evalto A_p',\, \langle \mathit{cmd}_2 \mathbin{;} \dots \rangle$}
\DisplayProof\hspace{4em}
\AxiomC{}
\LeftLabel{\textsc{Doc}}
\UnaryInfC{$A_p,\, \mathsf{doc} \evalto \mathsf{cursor}(\langle\rangle,\, \mathsf{doc})$}
\DisplayProof\proofSkipAmount
\end{center}

\begin{center}
\AxiomC{$A_p,\, \mathit{expr} \evalto \mathit{cur}$}
\LeftLabel{\textsc{Let}}
\UnaryInfC{$A_p,\, \mathsf{let}\; x = \mathit{expr} \evalto A_p[\,x \,\mapsto\, \mathit{cur}\,]$}
\DisplayProof\hspace{3em}
\AxiomC{$x \in \mathrm{dom}(A_p)$}
\LeftLabel{\textsc{Var}}
\UnaryInfC{$A_p,\, x \evalto A_p(x)$}
\DisplayProof\proofSkipAmount
\end{center}

\begin{prooftree}
\AxiomC{$A_p,\, \mathit{expr} \evalto \mathsf{cursor}(\langle k_1, \dots, k_{n-1} \rangle,\, k_n)$}
\AxiomC{$k_n \not= \mathsf{head}$}
\LeftLabel{\textsc{Get}}
\BinaryInfC{$A_p,\, \mathit{expr}.\mathsf{get}(\mathit{key}) \evalto
    \mathsf{cursor}(\langle k_1, \dots, k_{n-1}, \mathsf{mapT}(k_n) \rangle,\, \mathit{key})$}
\end{prooftree}

\begin{prooftree}
\AxiomC{$A_p,\, \mathit{expr} \evalto \mathsf{cursor}(\langle k_1, \dots, k_{n-1} \rangle,\, k_n)$}
\AxiomC{$A_p,\, \mathsf{cursor}(\langle k_1, \dots, k_{n-1}, \mathsf{listT}(k_n) \rangle,\,
    \mathsf{head}).\mathsf{idx}(i) \evalto \mathit{cur}'$}
\LeftLabel{$\textsc{Idx}_1$}
\BinaryInfC{$A_p,\, \mathit{expr}.\mathsf{idx}(i) \evalto \mathit{cur}'$}
\end{prooftree}

\begin{prooftree}
\AxiomC{$k_1 \in \mathrm{dom}(\mathit{ctx})$}
\AxiomC{$\mathit{ctx}(k_1),\, \mathsf{cursor}(\langle k_2, \dots, k_{n-1} \rangle,\, k_n).\mathsf{idx}(i)
    \evalto \mathsf{cursor}(\langle k_2, \dots, k_{n-1} \rangle,\, k_n')$}
\LeftLabel{$\textsc{Idx}_2$}
\BinaryInfC{$\mathit{ctx},\, \mathsf{cursor}(\langle k_1, k_2, \dots, k_{n-1} \rangle,\, k_n).\mathsf{idx}(i)
    \evalto \mathsf{cursor}(\langle k_1, k_2, \dots, k_{n-1} \rangle,\, k_n')$}
\end{prooftree}

\begin{prooftree}
\AxiomC{$i>0 \,\wedge\, \mathit{ctx}(\mathsf{next}(k)) = k' \,\wedge\, k' \not= \mathsf{tail}$}
\AxiomC{$\mathit{ctx}(\mathsf{pres}(k')) \not= \{\}$}
\AxiomC{$\mathit{ctx},\, \mathsf{cursor}(\langle\rangle,\, k').\mathsf{idx}(i-1) \evalto \mathit{ctx}'$}
\LeftLabel{$\textsc{Idx}_3$}
\TrinaryInfC{$\mathit{ctx},\, \mathsf{cursor}(\langle\rangle,\, k).\mathsf{idx}(i) \evalto \mathit{ctx}'$}
\end{prooftree}

\begin{prooftree}
\AxiomC{$i>0 \,\wedge\, \mathit{ctx}(\mathsf{next}(k)) = k' \,\wedge\, k' \not= \mathsf{tail}$}
\AxiomC{$\mathit{ctx}(\mathsf{pres}(k')) = \{\}$}
\AxiomC{$\mathit{ctx},\, \mathsf{cursor}(\langle\rangle,\, k').\mathsf{idx}(i) \evalto \mathit{cur}'$}
\LeftLabel{$\textsc{Idx}_4$}
\TrinaryInfC{$\mathit{ctx},\, \mathsf{cursor}(\langle\rangle,\, k).\mathsf{idx}(i) \evalto \mathit{cur}'$}
\end{prooftree}

\begin{prooftree}
\AxiomC{$i=0$}
\LeftLabel{$\textsc{Idx}_5$}
\UnaryInfC{$\mathit{ctx},\, \mathsf{cursor}(\langle\rangle,\, k).\mathsf{idx}(i) \evalto
    \mathsf{cursor}(\langle\rangle,\, k)$}
\end{prooftree}

\[ \mathrm{keys}(\mathit{ctx}) = \{\; k \mid
    \mathsf{mapT}(k)  \in \mathrm{dom}(\mathit{ctx}) \,\vee\,
    \mathsf{listT}(k) \in \mathrm{dom}(\mathit{ctx}) \,\vee\,
    \mathsf{regT}(k)  \in \mathrm{dom}(\mathit{ctx})
\;\} \]

\begin{prooftree}
\AxiomC{$A_p,\, \mathit{expr} \evalto \mathit{cur}$}
\AxiomC{$A_p,\, \mathit{cur}.\mathsf{keys} \evalto \mathit{keys}$}
\LeftLabel{$\textsc{Keys}_1$}
\BinaryInfC{$A_p,\, \mathit{expr}.\mathsf{keys} \evalto \mathit{keys}$}
\end{prooftree}

\begin{prooftree}
\AxiomC{$\mathit{map} = \mathit{ctx}(\mathsf{mapT}(k))$}
\AxiomC{$\mathit{keys} = \{\; k \mid k \in \mathrm{keys}(\mathit{map}) \,\wedge\,
    \mathit{map}(\mathsf{pres}(k)) \not= \{\} \;\}$}
\LeftLabel{$\textsc{Keys}_2$}
\BinaryInfC{$A_p,\, \mathsf{cursor}(\langle\rangle,\, k).\mathsf{keys} \evalto \mathit{keys}$}
\end{prooftree}

\begin{prooftree}
\AxiomC{$k_1 \in \mathrm{dom}(\mathit{ctx})$}
\AxiomC{$\mathit{ctx}(k_1),\, \mathsf{cursor}(\langle k_2, \dots, k_{n-1} \rangle,\, k_n).\mathsf{keys}
    \evalto \mathit{keys}$}
\LeftLabel{$\textsc{Keys}_3$}
\BinaryInfC{$\mathit{ctx},\, \mathsf{cursor}(\langle k_1, k_2, \dots, k_{n-1} \rangle,\, k_n).\mathsf{keys}
    \evalto \mathit{keys}$}
\end{prooftree}

\begin{prooftree}
\AxiomC{$A_p,\, \mathit{expr} \evalto \mathit{cur}$}
\AxiomC{$A_p,\, \mathit{cur}.\mathsf{values} \evalto \mathit{val}$}
\LeftLabel{$\textsc{Val}_1$}
\BinaryInfC{$A_p,\, \mathit{expr}.\mathsf{values} \evalto \mathit{val}$}
\end{prooftree}

\begin{prooftree}
\AxiomC{$\mathsf{regT}(k) \in \mathrm{dom}(\mathit{ctx})$}
\AxiomC{$\mathit{val} = \mathrm{range}(\mathit{ctx}(\mathsf{regT}(k)))$}
\LeftLabel{$\textsc{Val}_2$}
\BinaryInfC{$\mathit{ctx},\, \mathsf{cursor}(\langle\rangle,\, k).\mathsf{values} \evalto \mathit{val}$}
\end{prooftree}

\begin{prooftree}
\AxiomC{$k_1 \in \mathrm{dom}(\mathit{ctx})$}
\AxiomC{$\mathit{ctx}(k_1),\, \mathsf{cursor}(\langle k_2, \dots, k_{n-1} \rangle,\, k_n).\mathsf{values}
    \evalto \mathit{val}$}
\LeftLabel{$\textsc{Val}_3$}
\BinaryInfC{$\mathit{ctx},\, \mathsf{cursor}(\langle k_1, k_2, \dots, k_{n-1} \rangle,\, k_n).\mathsf{values}
    \evalto \mathit{val}$}
\end{prooftree}
\caption{Rules for evaluating expressions.}\label{fig:expr-rules}
\end{figure*}

\section{Formal Semantics}\label{sec:semantics}

We now explain formally how to achieve the concurrent semantics outlined in Section~\ref{sec:composing}. The state of replica $p$ is described by $A_p$, a finite partial function. The evaluation rules of the command language inspect and modify this local state $A_p$, and they do not depend on $A_q$ (the state of any other replica $q \neq p$). The only communication between replicas occurs in the evaluation of the \textsf{yield} command, which we discuss later. For now, we concentrate on the execution of commands at a single replica $p$.

\subsection{Expression Evaluation}

Figure~\ref{fig:expr-rules} gives the rules for evaluating EXPR expressions in the command language, which are evaluated in the context of the local replica state $A_p$. The \textsc{Exec} rule formalizes the assumption that commands are executed sequentially. The \textsc{Let} rule allows the program to define a local variable, which is added to the local state (the notation $A_p[\,x \,\mapsto\, \mathit{cur}\,]$ denotes a partial function that is the same as $A_p$, except that $A_p(x) = \mathit{cur}$). The corresponding \textsc{Var} rule allows the program to retrieve the value of a previously defined variable.

The following rules in Figure~\ref{fig:expr-rules} show how an expression is evaluated to a cursor, which unambiguously identifies a particular position in a JSON document by describing a path from the root of the document tree to some branch or leaf node. A cursor consists only of immutable keys and identifiers, so it can be sent over the network to another replica, where it can be used to locate the same position in the document.

For example,
\[ \mathsf{cursor}(\langle \mathsf{mapT}(\mathsf{doc}), \mathsf{listT}(\text{``shopping''}) \rangle,\, \mathit{id}_1) \]
is a cursor representing the list element \verb|"eggs"| in Figure~\ref{fig:make-doc}, assuming that $\mathit{id}_1$ is the unique identifier of the operation that inserted this list element (we will discuss these identifiers in Section~\ref{sec:lamport-ts}). The cursor can be interpreted as a path through the local replica state structure $A_p$, read from left to right: starting from the \textsf{doc} map at the root, it traverses through the map entry ``shopping'' of type \textsf{listT}, and finishes with the list element that has identifier $\mathit{id}_1$.

In general, $\mathsf{cursor}(\langle k_1, \dots, k_{n-1} \rangle,\, k_n)$ consists of a (possibly empty) vector of keys $\langle k_1, \dots, k_{n-1} \rangle$, and a final key $k_n$ which is always present. $k_n$ can be thought of as the final element of the vector, with the distinction that it is not tagged with a datatype, whereas the elements of the vector are tagged with the datatype of the branch node being traversed, either \textsf{mapT} or \textsf{listT}.

The \textsc{Doc} rule in Figure~\ref{fig:expr-rules} defines the simplest cursor $\mathsf{cursor}(\langle\rangle,\, \mathsf{doc})$, referencing the root of the document using the special atom \textsf{doc}. The \textsc{Get} rule navigates a cursor to a particular key within a map. For example, the expression \verb|doc.get("shopping")| evaluates to $\mathsf{cursor}(\langle \mathsf{mapT}(\mathsf{doc}) \rangle,\, \text{``shopping''})$ by applying the \textsc{Doc} and \textsc{Get} rules. Note that the expression \verb|doc.get(...)| implicitly asserts that \textsf{doc} is of type \textsf{mapT}, and this assertion is encoded in the cursor.

The rules $\textsc{Idx}_{1 \dots 5}$ define how to evaluate the expression \verb|.idx(|$n$\verb|)|, moving the cursor to a particular element of a list. $\textsc{Idx}_1$ constructs a cursor representing the head of the list, and delegates to the subsequent rules to scan over the list. $\textsc{Idx}_2$ recursively descends the local state according to the vector of keys in the cursor. When the vector of keys is empty, the context $\mathit{ctx}$ is the subtree of $A_p$ that stores the list in question, and the rules $\textsc{Idx}_{3,4,5}$ iterate over that list until the desired element is found.

$\textsc{Idx}_5$ terminates the iteration when the index reaches zero, while $\textsc{Idx}_3$ moves to the next element and decrements the index, and $\textsc{Idx}_4$ skips over list elements that are marked as deleted. The structure resembles a linked list: each list element has a unique identifier $k$, and the partial function representing local state maps $\mathsf{next}(k)$ to the ID of the list element that follows $k$.

Deleted elements are never removed from the linked list structure, but only marked as deleted (they become so-called \emph{tombstones}). To this end, the local state maintains a \emph{presence set} $\mathsf{pres}(k)$ for the list element with ID $k$, which is the set of all operations that have asserted the existence of this list element. When a list element is deleted, the presence set is set to the empty set, which marks it as deleted; however, a concurrent operation that references the list element can cause the presence set to be come non-empty again (leading to the situations in Figures~\ref{fig:map-remove} and~\ref{fig:todo-item}). Rule $\textsc{Idx}_4$ handles list elements with an empty presence set by moving to the next list element without decrementing the index (i.e., not counting them as list elements).

The $\textsc{Keys}_{1,2,3}$ rules allow the application to inspect the set of keys in a map. This set is determined by examining the local state, and excluding any keys for which the presence set is empty (indicating that the key has been deleted).

Finally, the $\textsc{Val}_{1,2,3}$ rules allow the application to read the contents of a register at a particular cursor position, using a similar recursive rule structure as the \textsc{Idx} rules. A register is expressed using the \textsf{regT} type annotation in the local state. Although a replica can only assign a single value to a register, a register can nevertheless contain multiple values if multiple replicas concurrently assign values to it.

\subsection{Generating Operations}

When commands mutate the state of the document, they generate \emph{operations} that describe the mutation. In our semantics, a command never directly modifies the local replica state $A_p$, but only generates an operation. That operation is then immediately applied to $A_p$ so that it takes effect locally, and the same operation is also asynchronously broadcast to the other replicas.

\subsubsection{Lamport Timestamps}\label{sec:lamport-ts}

Every operation in our model is given a unique identifier, which is used in the local state and in cursors. Whenever an element is inserted into a list, or a value is assigned to a register, the new list element or register value is identified by the identifier of the operation that created it.

In order to generate globally unique operation identifiers without requiring synchronous coordination between replicas we use Lamport timestamps~\cite{Lamport:1978jq}. A Lamport timestamp is a pair $(c, p)$ where $p \in \mathrm{ReplicaID}$ is the unique identifier of the replica on which the edit is made (for example, a hash of its public key), and $c \in \mathbb{N}$ is a counter that is stored at each replica and incremented for every operation. Since each replica generates a strictly monotonically increasing sequence of counter values $c$, the pair $(c, p)$ is unique.

If a replica receives an operation with a counter value $c$ that is greater than the locally stored counter value, the local counter is increased to the value of the incoming counter. This ensures that if operation $o_1$ causally happened before $o_2$ (that is, the replica that generated $o_2$ had received and processed $o_1$ before $o_2$ was generated), then $o_2$ must have a greater counter value than $o_1$. Only concurrent operations can have equal counter values.

We can thus define a total ordering $<$ for Lamport timestamps:
\[ (c_1, p_1) < (c_2, p_2) \;\text{ iff }\; (c_1 < c_2) \vee (c_1 = c_2 \wedge p_1 < p_2). \]
If one operation happened before another, this ordering is consistent with causality (the earlier operation has a lower timestamp). If two operations are concurrent, their order according to $<$ is arbitrary but deterministic. This ordering property is important for our definition of the semantics of ordered lists.

\subsubsection{Operation Structure}

An operation is a tuple of the form
\begin{alignat*}{3}
& \mathsf{op}( \\
&& \mathit{id} &: \mathbb{N} \times \mathrm{ReplicaID}, \\
&& \mathit{deps} &: \mathcal{P}(\mathbb{N} \times \mathrm{ReplicaID}), \\
&& \mathit{cur} &: \mathsf{cursor}(\langle k_1, \dots, k_{n-1} \rangle,\, k_n), \\
&& \mathit{mut} &: \mathsf{insert}(v) \mid \mathsf{delete} \mid \mathsf{assign}(v) && \quad v: \mathrm{VAL} \\
& )
\end{alignat*}
where $\mathit{id}$ is the Lamport timestamp that uniquely identifies the operation, $\mathit{cur}$ is the cursor describing the position in the document being modified, and $\mathit{mut}$ is the mutation that was requested at the specified position.

$\mathit{deps}$ is the set of \emph{causal dependencies} of the operation. It is defined as follows: if operation $o_2$ was generated by replica $p$, then a causal dependency of $o_2$ is any operation $o_1$ that had already been applied on $p$ at the time when $o_2$ was generated. In this paper, we define $\mathit{deps}$ as the set of Lamport timestamps of all causal dependencies. In a real implementation, this set would become impracticably large, so a compact representation of causal history would be used instead -- for example, version vectors~\cite{ParkerJr:1983jb}, state vectors~\cite{Ellis:1989ue}, or dotted version vectors~\cite{Preguica:2012fx}. However, to avoid ambiguity in our semantics we give the dependencies as a simple set of operation IDs.

The purpose of the causal dependencies $\mathit{deps}$ is to impose a partial ordering on operations: an operation can only be applied after all operations that ``happened before'' it have been applied. In particular, this means that the sequence of operations generated at one particular replica will be applied in the same order at every other replica. Operations that are concurrent (i.e., where there is no causal dependency in either direction) can be applied in any order.

\subsubsection{Semantics of Generating Operations}

\begin{figure*}
\centering
\begin{prooftree}
\AxiomC{$A_p,\, \mathit{expr} \evalto \mathit{cur}$}
\AxiomC{$\mathit{val}: \mathrm{VAL}$}
\AxiomC{$A_p,\, \mathsf{makeOp}(\mathit{cur}, \mathsf{assign}(\mathit{val})) \evalto A_p'$}
\LeftLabel{\textsc{Make-Assign}}
\TrinaryInfC{$A_p,\, \mathit{expr} \,\text{ := }\, \mathit{val} \evalto A_p'$}
\end{prooftree}

\begin{prooftree}
\AxiomC{$A_p,\, \mathit{expr} \evalto \mathit{cur}$}
\AxiomC{$\mathit{val}: \mathrm{VAL}$}
\AxiomC{$A_p,\, \mathsf{makeOp}(\mathit{cur}, \mathsf{insert}(\mathit{val})) \evalto A_p'$}
\LeftLabel{\textsc{Make-Insert}}
\TrinaryInfC{$A_p,\, \mathit{expr}.\mathsf{insertAfter}(\mathit{val}) \evalto A_p'$}
\end{prooftree}

\begin{prooftree}
\AxiomC{$A_p,\, \mathit{expr} \evalto \mathit{cur}$}
\AxiomC{$A_p,\, \mathsf{makeOp}(\mathit{cur}, \mathsf{delete}) \evalto A_p'$}
\LeftLabel{\textsc{Make-Delete}}
\BinaryInfC{$A_p,\, \mathit{expr}.\mathsf{delete} \evalto A_p'$}
\end{prooftree}

\begin{prooftree}
\AxiomC{$\mathit{ctr} = \mathrm{max}(\{0\} \,\cup\, \{ c_i \mid (c_i, p_i) \in A_p(\mathsf{ops}) \}$}
\AxiomC{$A_p,\, \mathsf{apply}(\mathsf{op}((\mathit{ctr} + 1, p), A_p(\mathsf{ops}),
    \mathit{cur}, \mathit{mut})) \evalto A_p'$}
\LeftLabel{\textsc{Make-Op}}
\BinaryInfC{$A_p,\, \mathsf{makeOp}(\mathit{cur}, \mathit{mut}) \evalto A_p'$}
\end{prooftree}

\begin{prooftree}
\AxiomC{$A_p,\, \mathit{op} \evalto A_p'$}
\LeftLabel{\textsc{Apply-Local}}
\UnaryInfC{$A_p,\, \mathsf{apply}(\mathit{op}) \evalto A_p'[\,
    \mathsf{queue} \,\mapsto\, A_p'(\mathsf{queue}) \,\cup\, \{\mathit{op}\},\;
    \mathsf{ops} \,\mapsto\, A_p'(\mathsf{ops}) \,\cup\, \{\mathit{op.id}\}\,]$}
\end{prooftree}

\begin{prooftree}
\AxiomC{$\mathit{op} \in A_p(\mathsf{recv})$}
\AxiomC{$\mathit{op.id} \notin A_p(\mathsf{ops})$}
\AxiomC{$\mathit{op.deps} \subseteq A_p(\mathsf{ops})$}
\AxiomC{$A_p,\, \mathit{op} \evalto A_p'$}
\LeftLabel{\textsc{Apply-Remote}}
\QuaternaryInfC{$A_p,\, \mathsf{yield} \evalto
    A_p'[\,\mathsf{ops} \,\mapsto\, A_p'(\mathsf{ops}) \,\cup\, \{\mathit{op.id}\}\,]$}
\end{prooftree}

\begin{prooftree}
\AxiomC{}
\LeftLabel{\textsc{Send}}
\UnaryInfC{$A_p,\, \mathsf{yield} \evalto
    A_p[\,\mathsf{send} \,\mapsto\, A_p(\mathsf{send}) \,\cup\, A_p(\mathsf{queue})\,]$}
\end{prooftree}

\begin{prooftree}
\AxiomC{$q: \mathrm{ReplicaID}$}
\LeftLabel{\textsc{Recv}}
\UnaryInfC{$A_p,\, \mathsf{yield} \evalto
    A_p[\,\mathsf{recv} \,\mapsto\, A_p(\mathsf{recv}) \,\cup\, A_q(\mathsf{send})\,]$}
\end{prooftree}

\begin{prooftree}
\AxiomC{$A_p,\, \mathsf{yield} \evalto A_p'$}
\AxiomC{$A_p',\, \mathsf{yield} \evalto A_p''$}
\LeftLabel{\textsc{Yield}}
\BinaryInfC{$A_p,\, \mathsf{yield} \evalto A_p''$}
\end{prooftree}
\caption{Rules for generating, sending, and receiving operations.}
\label{fig:send-recv}
\end{figure*}

The evaluation rules for commands are given in Figure~\ref{fig:send-recv}. The \textsc{Make-Assign}, \textsc{Make-Insert} and \textsc{Make-Delete} rules define how these respective commands mutate the document: all three delegate to the \textsc{Make-Op} rule to generate and apply the operation. \textsc{Make-Op} generates a new Lamport timestamp by choosing a counter value that is 1 greater than any existing counter in $A_p(\mathsf{ops})$, the set of all operation IDs that have been applied to replica $p$.

\textsc{Make-Op} constructs an \textsf{op()} tuple of the form described above, and delegates to the \textsc{Apply-Local} rule to process the operation. \textsc{Apply-Local} does three things: it evaluates the operation to produce a modified local state $A_p'$, it adds the operation to the queue of generated operations $A_p(\mathsf{queue})$, and it adds the ID to the set of processed operations $A_p(\mathsf{ops})$.

The \textsf{yield} command, inspired by Burckhardt et al.~\cite{Burckhardt:2012jy}, performs network communication: sending and receiving operations to and from other replicas, and applying operations from remote replicas. The rules \textsc{Apply-Remote}, \textsc{Send}, \textsc{Recv} and \textsc{Yield} define the semantics of \textsf{yield}. Since any of these rules can be used to evaluate \textsf{yield}, their effect is nondeterministic, which models the asynchronicity of the network: a message sent by one replica arrives at another replica at some arbitrarily later point in time, and there is no message ordering guarantee in the network.

The \textsc{Send} rule takes any operations that were placed in $A_p(\mathsf{queue})$ by \textsc{Apply-Local} and adds them to a send buffer $A_p(\mathsf{send})$. Correspondingly, the \textsc{Recv} rule takes operations in the send buffer of replica $q$ and adds them to the receive buffer $A_p(\mathsf{recv})$ of replica $p$. This is the only rule that involves more than one replica, and it models all network communication.

Once an operation appears in the receive buffer $A_p(\mathsf{recv})$, the rule \textsc{Apply-Remote} may apply. Under the preconditions that the operation has not already been processed and that its causal dependencies are satisfied, \textsc{Apply-Remote} applies the operation in the same way as \textsc{Apply-Local}, and adds its ID to the set of processed operations $A_p(\mathsf{ops})$.

The actual document modifications are performed by applying the operations, which we discuss next.

\subsection{Applying Operations}

\begin{sidewaysfigure*}
\begin{prooftree}
\AxiomC{$\mathit{ctx},\, k_1 \evalto \mathit{child}$}
\AxiomC{$\mathit{child},\, \mathsf{op}(\mathit{id}, \mathit{deps},
    \mathsf{cursor}(\langle k_2, \dots, k_{n-1} \rangle,\, k_n), \mathit{mut}) \evalto \mathit{child}'$}
\AxiomC{$\mathit{ctx},\, \mathsf{addId}(k_1, \mathit{id}, \mathit{mut}) \evalto \mathit{ctx}'$}
\LeftLabel{\textsc{Descend}}
\TrinaryInfC{$\mathit{ctx},\, \mathsf{op}(\mathit{id}, \mathit{deps},
    \mathsf{cursor}(\langle k_1, k_2, \dots, k_{n-1} \rangle,\, k_n), \mathit{mut}) \evalto
    \mathit{ctx}' [\, k_1 \,\mapsto\, \mathit{child}' \,]$}
\end{prooftree}\vspace{6pt}

\begin{center}
\AxiomC{$k \in \mathrm{dom}(\mathit{ctx})$}
\LeftLabel{\textsc{Child-Get}}
\UnaryInfC{$\mathit{ctx},\, k \evalto \mathit{ctx}(k)$}
\DisplayProof\hspace{3em}
\AxiomC{$\mathsf{mapT}(k) \notin \mathrm{dom}(\mathit{ctx})$}
\LeftLabel{\textsc{Child-Map}}
\UnaryInfC{$\mathit{ctx},\, \mathsf{mapT}(k) \evalto \{\}$}
\DisplayProof\hspace{3em}
\AxiomC{$\mathsf{listT}(k) \notin \mathrm{dom}(\mathit{ctx})$}
\LeftLabel{\textsc{Child-List}}
\UnaryInfC{$\mathit{ctx},\, \mathsf{listT}(k) \evalto
    \{\,\mathsf{next}(\mathsf{head}) \,\mapsto\, \mathsf{tail} \,\}$}
\DisplayProof\proofSkipAmount
\end{center}\vspace{6pt}

\begin{center}
\AxiomC{$\mathsf{regT}(k) \notin \mathrm{dom}(\mathit{ctx})$}
\LeftLabel{\textsc{Child-Reg}}
\UnaryInfC{$\mathit{ctx},\, \mathsf{regT}(k) \evalto \{\}$}
\DisplayProof\hspace{3em}
\AxiomC{$\mathsf{pres}(k) \in \mathrm{dom}(\mathit{ctx})$}
\LeftLabel{$\textsc{Presence}_1$}
\UnaryInfC{$\mathit{ctx},\, \mathsf{pres}(k) \evalto \mathit{ctx}(\mathsf{pres}(k))$}
\DisplayProof\hspace{3em}
\AxiomC{$\mathsf{pres}(k) \notin \mathrm{dom}(\mathit{ctx})$}
\LeftLabel{$\textsc{Presence}_2$}
\UnaryInfC{$\mathit{ctx},\, \mathsf{pres}(k) \evalto \{\}$}
\DisplayProof\proofSkipAmount
\end{center}\vspace{6pt}

\begin{center}
\AxiomC{$\mathit{mut} \not= \mathsf{delete}$}
\AxiomC{$k_\mathit{tag} \in \{\mathsf{mapT}(k), \mathsf{listT}(k), \mathsf{regT}(k)\}$}
\AxiomC{$\mathit{ctx},\, \mathsf{pres}(k) \evalto \mathit{pres}$}
\LeftLabel{$\textsc{Add-ID}_1$}
\TrinaryInfC{$\mathit{ctx},\, \mathsf{addId}(k_\mathit{tag}, \mathit{id}, \mathit{mut}) \evalto
    \mathit{ctx}[\, \mathsf{pres}(k) \,\mapsto\, \mathit{pres} \,\cup\, \{\mathit{id}\} \,]$}
\DisplayProof\hspace{3em}
\AxiomC{$\mathit{mut} = \mathsf{delete}$}
\LeftLabel{$\textsc{Add-ID}_2$}
\UnaryInfC{$\mathit{ctx},\, \mathsf{addId}(k_\mathit{tag}, \mathit{id}, \mathit{mut}) \evalto \mathit{ctx}$}
\DisplayProof\proofSkipAmount
\end{center}\vspace{6pt}

\begin{prooftree}
\AxiomC{$\mathit{val} \not= \texttt{[]} \,\wedge\, \mathit{val} \not= \texttt{\string{\string}}$}
\AxiomC{$\mathit{ctx},\, \mathsf{clear}(\mathit{deps}, \mathsf{regT}(k)) \evalto \mathit{ctx}',\, \mathit{pres}$}
\AxiomC{$\mathit{ctx}',\, \mathsf{addId}(\mathsf{regT}(k), \mathit{id}, \mathsf{assign}(\mathit{val}))
    \evalto \mathit{ctx}''$}
\AxiomC{$\mathit{ctx}'',\, \mathsf{regT}(k) \evalto \mathit{child}$}
\LeftLabel{\textsc{Assign}}
\QuaternaryInfC{$\mathit{ctx},\, \mathsf{op}(\mathit{id}, \mathit{deps},
    \mathsf{cursor}(\langle\rangle,\, k), \mathsf{assign}(\mathit{val})) \evalto
    \mathit{ctx}''[\, \mathsf{regT}(k) \,\mapsto\,
    \mathit{child}[\, \mathit{id} \,\mapsto\, \mathit{val} \,]\,]$}
\end{prooftree}\vspace{6pt}

\begin{prooftree}
\AxiomC{$\mathit{val} = \texttt{\string{\string}}$}
\AxiomC{$\mathit{ctx},\, \mathsf{clearElem}(\mathit{deps}, k) \evalto \mathit{ctx}',\, \mathit{pres}$}
\AxiomC{$\mathit{ctx}',\, \mathsf{addId}(\mathsf{mapT}(k), \mathit{id}, \mathsf{assign}(\mathit{val}))
    \evalto \mathit{ctx}''$}
\AxiomC{$\mathit{ctx}'',\, \mathsf{mapT}(k) \evalto \mathit{child}$}
\LeftLabel{\textsc{Empty-Map}}
\QuaternaryInfC{$\mathit{ctx},\, \mathsf{op}(\mathit{id}, \mathit{deps},
    \mathsf{cursor}(\langle\rangle,\, k), \mathsf{assign}(\mathit{val})) \evalto
    \mathit{ctx}''[\, \mathsf{mapT}(k) \,\mapsto\, \mathit{child} \,]$}
\end{prooftree}\vspace{6pt}

\begin{prooftree}
\AxiomC{$\mathit{val} = \texttt{[]}$}
\AxiomC{$\mathit{ctx},\, \mathsf{clearElem}(\mathit{deps}, k) \evalto \mathit{ctx}',\, \mathit{pres}$}
\AxiomC{$\mathit{ctx}',\, \mathsf{addId}(\mathsf{listT}(k), \mathit{id}, \mathsf{assign}(\mathit{val}))
    \evalto \mathit{ctx}''$}
\AxiomC{$\mathit{ctx}'',\, \mathsf{listT}(k) \evalto \mathit{child}$}
\LeftLabel{\textsc{Empty-List}}
\QuaternaryInfC{$\mathit{ctx},\, \mathsf{op}(\mathit{id}, \mathit{deps},
    \mathsf{cursor}(\langle\rangle,\, k), \mathsf{assign}(\mathit{val})) \evalto
    \mathit{ctx}''[\, \mathsf{listT}(k) \,\mapsto\, \mathit{child} \,]$}
\end{prooftree}\vspace{6pt}

\begin{prooftree}
\AxiomC{$\mathit{ctx}(\mathsf{next}(\mathit{prev})) = \mathit{next}$}
\AxiomC{$\mathit{next} < \mathit{id} \,\vee\, \mathit{next} = \mathsf{tail}$}
\AxiomC{$\mathit{ctx},\, \mathsf{op}(\mathit{id}, \mathit{deps},
    \mathsf{cursor}(\langle\rangle,\, \mathit{id}), \mathsf{assign}(\mathit{val})) \evalto \mathit{ctx}'$}
\LeftLabel{$\textsc{Insert}_1$}
\TrinaryInfC{$\mathit{ctx},\, \mathsf{op}(\mathit{id}, \mathit{deps},
    \mathsf{cursor}(\langle\rangle,\, \mathit{prev}), \mathsf{insert}(\mathit{val})) \evalto
    \mathit{ctx}'[\,\mathsf{next}(\mathit{prev}) \,\mapsto\, \mathit{id},\;
    \mathsf{next}(\mathit{id}) \,\mapsto\, \mathit{next}\,]$}
\end{prooftree}\vspace{6pt}

\begin{prooftree}
\AxiomC{$\mathit{ctx}(\mathsf{next}(\mathit{prev})) = \mathit{next}$}
\AxiomC{$\mathit{id} < \mathit{next}$}
\AxiomC{$ctx,\, \mathsf{op}(\mathit{id}, \mathit{deps},
    \mathsf{cursor}(\langle\rangle,\, \mathit{next}), \mathsf{insert}(\mathit{val})) \evalto \mathit{ctx}'$}
\LeftLabel{$\textsc{Insert}_2$}
\TrinaryInfC{$\mathit{ctx},\, \mathsf{op}(\mathit{id}, \mathit{deps},
    \mathsf{cursor}(\langle\rangle,\, \mathit{prev}), \mathsf{insert}(\mathit{val})) \evalto \mathit{ctx}'$}
\end{prooftree}
\caption{Rules for applying insertion and assignment operations to update the state of a replica.}\label{fig:operation-rules}
\end{sidewaysfigure*}

Figure~\ref{fig:operation-rules} gives the rules that apply an operation $\mathsf{op}$ to a context $\mathit{ctx}$, producing an updated context $\mathit{ctx}'$. The context is initially the replica state $A_p$, but may refer to subtrees of the state as rules are recursively applied. These rules are used by \textsc{Apply-Local} and \textsc{Apply-Remote} to perform the state updates on a document.

When the operation cursor's vector of keys is non-empty, the \textsc{Descend} rule first applies. It recursively descends the document tree by following the path described by the keys. If the tree node already exists in the local replica state, \textsc{Child-Get} finds it, otherwise \textsc{Child-Map} and \textsc{Child-List} create an empty map or list respectively.

The \textsc{Descend} rule also invokes $\textsc{Add-ID}_{1,2}$ at each tree node along the path described by the cursor, adding the operation ID to the presence set $\mathsf{pres}(k)$ to indicate that the subtree includes a mutation made by this operation.

The remaining rules in Figure~\ref{fig:operation-rules} apply when the vector of keys in the cursor is empty, which is the case when descended to the context of the tree node to which the mutation applies. The \textsc{Assign} rule handles assignment of a primitive value to a register, \textsc{Empty-Map} handles assignment where the value is the empty map literal \verb|{}|, and \textsc{Empty-List} handles assignment of the empty list \verb|[]|. These three rules for \textsf{assign} have a similar structure: first clearing the prior value at the cursor (as discussed in the next section), then adding the operation ID to the presence set, and finally incorporating the new value into the tree of local state.

The $\textsc{Insert}_{1,2}$ rules handle insertion of a new element into an ordered list. In this case, the cursor refers to the list element $\mathit{prev}$, and the new element is inserted after that position in the list. $\textsc{Insert}_1$ performs the insertion by manipulating the linked list structure. $\textsc{Insert}_2$ handles the case of multiple replicas concurrently inserting list elements at the same position, and uses the ordering relation $<$ on Lamport timestamps to consistently determine the insertion point. Our approach for handling insertions is based on the RGA algorithm~\cite{Roh:2011dw}. We show later that these rules ensure all replicas converge towards the same state.

\subsubsection{Clearing Prior State}

\begin{figure*}
\begin{prooftree}
\AxiomC{$\mathit{ctx},\, \mathsf{clearElem}(\mathit{deps}, k) \evalto \mathit{ctx}',\, \mathit{pres}$}
\LeftLabel{\textsc{Delete}}
\UnaryInfC{$\mathit{ctx},\, \mathsf{op}(\mathit{id}, \mathit{deps},
    \mathsf{cursor}(\langle\rangle,\, k), \mathsf{delete}) \evalto \mathit{ctx}'$}
\end{prooftree}

\begin{prooftree}
\AxiomC{$\mathit{ctx},\, \mathsf{clearAny}(\mathit{deps}, k) \evalto \mathit{ctx}', \mathit{pres}_1$}
\AxiomC{$\mathit{ctx}',\, \mathsf{pres}(k) \evalto \mathit{pres}_2$}
\AxiomC{$\mathit{pres}_3 = \mathit{pres}_1 \,\cup\, \mathit{pres}_2 \setminus \mathit{deps}$}
\LeftLabel{\textsc{Clear-Elem}}
\TrinaryInfC{$\mathit{ctx},\, \mathsf{clearElem}(\mathit{deps}, k) \evalto
    \mathit{ctx}' [\, \mathsf{pres}(k) \,\mapsto\, \mathit{pres}_3 \,],\, \mathit{pres}_3$}
\end{prooftree}

\begin{prooftree}
\AxiomC{$\begin{matrix}
    \mathit{ctx},\, \mathsf{clear}(\mathit{deps}, \mathsf{mapT}(k)) \\
    \evalto \mathit{ctx}_1,\, \mathit{pres}_1 \end{matrix}$}
\AxiomC{$\begin{matrix}
    \mathit{ctx}_1,\, \mathsf{clear}(\mathit{deps}, \mathsf{listT}(k)) \\
    \evalto \mathit{ctx}_2,\, \mathit{pres}_2 \end{matrix}$}
\AxiomC{$\begin{matrix}
    \mathit{ctx}_2,\, \mathsf{clear}(\mathit{deps}, \mathsf{regT}(k)) \\
    \evalto \mathit{ctx}_3,\, \mathit{pres}_3 \end{matrix}$}
\LeftLabel{\textsc{Clear-Any}}
\TrinaryInfC{$\mathit{ctx},\, \mathsf{clearAny}(\mathit{deps}, k) \evalto
    \mathit{ctx}_3,\, \mathit{pres}_1 \,\cup\, \mathit{pres}_2 \,\cup\, \mathit{pres}_3$}
\end{prooftree}

\begin{prooftree}
\AxiomC{$k \notin \mathrm{dom}(\mathit{ctx})$}
\LeftLabel{\textsc{Clear-None}}
\UnaryInfC{$\mathit{ctx},\, \mathsf{clear}(\mathit{deps}, k) \evalto \mathit{ctx},\, \{\}$}
\end{prooftree}

\begin{prooftree}
\AxiomC{$\mathsf{regT}(k) \in \mathrm{dom}(\mathit{ctx})$}
\AxiomC{$\mathit{concurrent} = \{ \mathit{id} \mapsto v \mid
    (\mathit{id} \mapsto v) \in \mathit{ctx}(\mathsf{regT}(k))
    \,\wedge\, \mathit{id} \notin \mathit{deps} \}$}
\LeftLabel{\textsc{Clear-Reg}}
\BinaryInfC{$\mathit{ctx},\, \mathsf{clear}(\mathit{deps}, \mathsf{regT}(k)) \evalto
    \mathit{ctx}[\, \mathsf{regT}(k) \,\mapsto\, \mathit{concurrent} \,],\, \mathrm{dom}(\mathit{concurrent})$}
\end{prooftree}

\begin{prooftree}
\AxiomC{$\mathsf{mapT}(k) \in \mathrm{dom}(\mathit{ctx})$}
\AxiomC{$\mathit{ctx}(\mathsf{mapT}(k)),\, \mathsf{clearMap}(\mathit{deps}, \{\}) \evalto
    \mathit{cleared},\, \mathit{pres}$}
\LeftLabel{$\textsc{Clear-Map}_1$}
\BinaryInfC{$\mathit{ctx},\, \mathsf{clear}(\mathit{deps}, \mathsf{mapT}(k)) \evalto
    \mathit{ctx} [\, \mathsf{mapT}(k) \,\mapsto\, \mathit{cleared} \,],\, \mathit{pres}$}
\end{prooftree}

\begin{prooftree}
\AxiomC{$\begin{matrix}
    k \in \mathrm{keys}(\mathit{ctx}) \\
    \wedge\, k \notin \mathit{done} \end{matrix}$}
\AxiomC{$\begin{matrix}
    \mathit{ctx},\, \mathsf{clearElem}(\mathit{deps}, k) \\
    \evalto \mathit{ctx}', \mathit{pres}_1 \end{matrix}$}
\AxiomC{$\begin{matrix}
    \mathit{ctx}',\, \mathsf{clearMap}(\mathit{deps}, \mathit{done} \cup \{k\}) \\
    \evalto \mathit{ctx}'',\, \mathit{pres}_2 \end{matrix}$}
\LeftLabel{$\textsc{Clear-Map}_2$}
\TrinaryInfC{$\mathit{ctx},\, \mathsf{clearMap}(\mathit{deps}, \mathit{done})
    \evalto \mathit{ctx}'',\, \mathit{pres}_1 \,\cup\, \mathit{pres}_2$}
\end{prooftree}

\begin{prooftree}
\AxiomC{$\mathit{done} = \mathrm{keys}(\mathit{ctx})$}
\LeftLabel{$\textsc{Clear-Map}_3$}
\UnaryInfC{$\mathit{ctx},\, \mathsf{clearMap}(\mathit{deps}, \mathit{done}) \evalto \mathit{ctx},\, \{\}$}
\end{prooftree}

\begin{prooftree}
\AxiomC{$\mathsf{listT}(k) \in \mathrm{dom}(\mathit{ctx})$}
\AxiomC{$\mathit{ctx}(\mathsf{listT}(k)),\, \mathsf{clearList}(\mathit{deps}, \mathsf{head})
    \evalto \mathit{cleared},\, \mathit{pres}$}
\LeftLabel{$\textsc{Clear-List}_1$}
\BinaryInfC{$\mathit{ctx},\, \mathsf{clear}(\mathit{deps}, \mathsf{listT}(k)) \evalto
    \mathit{ctx}[\, \mathsf{listT}(k) \,\mapsto\, \mathit{cleared} \,],\, \mathit{pres}$}
\end{prooftree}

\begin{prooftree}
\AxiomC{$\begin{matrix}
    k \not= \mathsf{tail} \,\wedge\\
    \mathit{ctx}(\mathsf{next}(k)) = \mathit{next} \end{matrix}$}
\AxiomC{$\begin{matrix}
    \mathit{ctx},\, \mathsf{clearElem}(\mathit{deps}, k) \\
    \evalto \mathit{ctx}', \mathit{pres}_1 \end{matrix}$}
\AxiomC{$\begin{matrix}
    \mathit{ctx}',\, \mathsf{clearList}(\mathit{deps}, \mathit{next}) \\
    \evalto \mathit{ctx}'', \mathit{pres}_2 \end{matrix}$}
\LeftLabel{$\textsc{Clear-List}_2$}
\TrinaryInfC{$\mathit{ctx},\, \mathsf{clearList}(\mathit{deps}, k) \evalto
    \mathit{ctx}'',\, \mathit{pres}_1 \,\cup\, \mathit{pres}_2$}
\end{prooftree}

\begin{prooftree}
\AxiomC{$k = \mathsf{tail}$}
\LeftLabel{$\textsc{Clear-List}_3$}
\UnaryInfC{$\mathit{ctx},\, \mathsf{clearList}(\mathit{deps}, k) \evalto \mathit{ctx},\, \{\}$}
\end{prooftree}
\caption{Rules for applying deletion operations to update the state of a replica.}\label{fig:clear-rules}
\end{figure*}

Assignment and deletion operations require that prior state (the value being overwritten or deleted) is cleared, while also ensuring that concurrent modifications are not lost, as illustrated in Figure~\ref{fig:map-remove}. The rules to handle this clearing process are given in Figure~\ref{fig:clear-rules}. Intuitively, the effect of clearing something is to reset it to its empty state by undoing any operations that causally precede the current operation, while leaving the effect of any concurrent operations untouched.

A \textsf{delete} operation can be used to delete either an element from an ordered list or a key from a map, depending on what the cursor refers to. The \textsc{Delete} rule shows how this operation is evaluated by delegating to \textsc{Clear-Elem}. In turn, \textsc{Clear-Elem} uses \textsc{Clear-Any} to clear out any data with a given key, regardless of whether it is of type \textsf{mapT}, \textsf{listT} or \textsf{regT}, and also updates the presence set to include any nested operation IDs, but exclude any operations in $\mathit{deps}$.

The premises of \textsc{Clear-Any} are satisfied by $\textsc{Clear-Map}_1$, $\textsc{Clear-List}_1$ and \textsc{Clear-Reg} if the respective key appears in $\mathit{ctx}$, or by \textsc{Clear-None} (which does nothing) if the key is absent.

As defined by the \textsc{Assign} rule, a register maintains a mapping from operation IDs to values. \textsc{Clear-Reg} updates a register by removing all operation IDs that appear in $\mathit{deps}$ (i.e., which causally precede the clearing operation), but retaining all operation IDs that do not appear in $\mathit{deps}$ (from assignment operations that are concurrent with the clearing operation).

Clearing maps and lists takes a similar approach: each element of the map or list is recursively cleared using \textsf{clearElem}, and presence sets are updated to exclude $\mathit{deps}$. Thus, any list elements or map entries whose modifications causally precede the clearing operation will end up with empty presence sets, and thus be considered deleted. Any map or list elements containing operations that are concurrent with the clearing operation are preserved.

\subsection{Convergence}\label{sec:convergence}

As outlined in Section~\ref{sec:intro-replication}, we require that all replicas automatically converge towards the same state -- a key requirement of a CRDT. We now formalize this notion, and show that the rules in Figures~\ref{fig:expr-rules} to~\ref{fig:clear-rules} satisfy this requirement.

\begin{definition}[valid execution]\label{def:valid-exec}
A \emph{valid execution} is a set of operations generated by a set of replicas $\{p_1, \dots, p_k\}$, each reducing a sequence of commands $\langle \mathit{cmd}_1 \mathbin{;} \dots \mathbin{;} \mathit{cmd}_n \rangle$ without getting stuck.
\end{definition}

A reduction gets stuck if there is no application of rules in which all premises are satisfied. For example, the $\textsc{Idx}_{3,4}$ rules in Figure~\ref{fig:expr-rules} get stuck if $\mathsf{idx}(n)$ tries to iterate past the end of a list, which would happen if $n$ is greater than the number of non-deleted elements in the list; in a real implementation this would be a runtime error. By constraining valid executions to those that do not get stuck, we ensure that operations only refer to list elements that actually exist.

Note that it is valid for an execution to never perform any network communication, either because it never invokes the \textsf{yield} command, or because the nondeterministic execution of \textsf{yield} never applies the \textsc{Recv} rule. We need only a replica's local state to determine whether reduction gets stuck.

\begin{definition}[history]\label{def:history}
A \emph{history} is a sequence of operations in the order it was applied at one particular replica $p$ by application of the rules \textsc{Apply-Local} and \textsc{Apply-Remote}.
\end{definition}

Since the evaluation rules sequentially apply one operation at a time at a given replica, the order is well-defined. Even if two replicas $p$ and $q$ applied the same set of operations, i.e. if $A_p(\mathsf{ops}) = A_q(\mathsf{ops})$, they may have applied any concurrent operations in a different order. Due to the premise $\mathit{op.deps} \subseteq A_p(\mathsf{ops})$ in \textsc{Apply-Remote}, histories are consistent with causality: if an operation has causal dependencies, it appears at some point after those dependencies in the history.

\begin{definition}[document state]\label{def:doc-state}
The \emph{document state} of a replica $p$ is the subtree of $A_p$ containing the document: that is, $A_p(\mathsf{mapT}(\mathsf{doc}))$ or $A_p(\mathsf{listT}(\mathsf{doc}))$ or $A_p(\mathsf{regT}(\mathsf{doc}))$, whichever is defined.
\end{definition}

$A_p$ contains variables defined with \textsf{let}, which are local to one replica, and not part of the replicated state. The definition of document state excludes these variables.

\begin{convergence-thm}
For any two replicas $p$ and $q$ that participated in a valid execution, if $A_p(\mathsf{ops}) = A_q(\mathsf{ops})$, then $p$ and $q$ have the same document state.
\end{convergence-thm}

This theorem is proved in the appendix. It formalizes the safety property of convergence: if two replicas have processed the same set of operations, possibly in a different order, then they are in the same state. In combination with a liveness property, namely that every replica eventually processes all operations, we obtain the desired notion of convergence: all replicas eventually end up in the same state.

The liveness property depends on assumptions of replicas invoking \textsf{yield} sufficiently often, and all nondeterministic rules for \textsf{yield} being chosen fairly. We will not formalize the liveness property in this paper, but assert that it can usually be provided in practice, as network interruptions are usually of finite duration.

\section{Conclusions and Further Work}

In this paper we demonstrated how to compose CRDTs for ordered lists, maps and registers into a compound CRDT with a JSON data model. It supports arbitrarily nested lists and maps, and it allows replicas to make arbitrary changes to the data without waiting for network communication. Replicas asynchronously send mutations to other replicas in the form of operations. Concurrent operations are commutative, which ensures that replicas converge towards the same state without requiring application-specific conflict resolution logic.

This work focused on the formal semantics of the JSON CRDT, represented as a mathematical model. We are also working on a practical implementation of the algorithm, and will report on its performance characteristics in follow-on work.

Our principle of not losing input due to concurrent modifications appears at first glance to be reasonable, but as illustrated in Figure~\ref{fig:todo-item}, it leads to merged document states that may be surprising to application programmers who are more familiar with sequential programs. Further work will be needed to understand the expectations of application programmers, and to design data structures that are minimally surprising under concurrent modification. It may turn out that a schema language will be required to support more complex applications. A schema language could also support semantic annotations, such as indicating that a number should be treated as a counter rather than a register.

The CRDT defined in this paper supports insertion, deletion and assignment operations. In addition to these, it would be useful to support a \emph{move} operation (to change the order of elements in an ordered list, or to move a subtree from one position in a document to another) and an \emph{undo} operation. Moreover, garbage collection (tombstone removal) is required in order to prevent unbounded growth of the data structure. We plan to address these missing features in future work.

\section*{Acknowledgements}

This research was supported by a grant from The Boeing Company. Thank you to Dominic Orchard, Diana Vasile, and the anonymous reviewers for comments that improved this paper.

\bibliographystyle{IEEEtran}

\vfill

\begin{IEEEbiography}[{\includegraphics[width=1in]{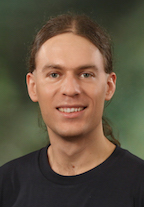}}]{Martin Kleppmann}
is a Research Associate in the Computer Laboratory at the University of Cambridge. His current research project, TRVE Data, is working towards better security and privacy in cloud applications by applying end-to-end encryption to collaboratively editable application data. His book \emph{Designing Data-Intensive Applications} was published by O'Reilly Media in 2017. Previously, he worked as a software engineer and entrepreneur at several internet companies, including Rapportive and LinkedIn.
\end{IEEEbiography}

\begin{IEEEbiography}[{\includegraphics[width=1in]{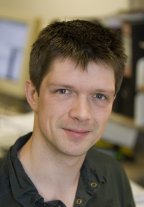}}]{Alastair R. Beresford}
is a Senior Lecturer in the Computer Laboratory at the University of Cambridge. His research work explores the security and privacy of large-scale distributed systems, with a particular focus on networked mobile devices such as smartphones, tablets and laptops. He looks at the security and privacy of the devices themselves, as well as the security and privacy problems induced by the interaction between mobile devices and cloud-based Internet services.
\end{IEEEbiography}

\ifincludeappendix
\clearpage
\appendix[Proof of Convergence]\label{sec:proof}

\begin{theorem}\label{thm:convergence}
For any two replicas $p$ and $q$ that participated in a valid execution, if $A_p(\mathsf{ops}) = A_q(\mathsf{ops})$, then $p$ and $q$ have the same document state.
\end{theorem}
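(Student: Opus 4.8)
The plan is to reduce convergence to the commutativity of the operation-application rules. The key observation is that the document state is a deterministic function of the \emph{history}—the sequence of operations applied at a replica—and two replicas satisfying $A_p(\mathsf{ops}) = A_q(\mathsf{ops})$ share the same \emph{set} of applied operations but may have applied them in different orders. So it suffices to prove that applying the same set of operations in any two order-consistent-with-causality sequences yields the same state. I would formalize this as a commutativity lemma: if two operations $o_1, o_2$ are concurrent (neither is in the other's $\mathit{deps}$), then applying $o_1$ then $o_2$ produces the same context as applying $o_2$ then $o_1$. Given such a lemma, any two valid histories over the same operation set are related by a sequence of adjacent transpositions of concurrent operations (a standard argument, since the causal order is a partial order and both histories are linear extensions of it), and each transposition preserves the resulting state.

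**First** I would set up the induction properly. The natural approach is induction on the size of $A_p(\mathsf{ops})$, peeling off a maximal operation (one on which nothing else depends) from both histories. Because $\mathit{deps}$ is downward-closed in any history (by the premise $\mathit{op.deps} \subseteq A_p(\mathsf{ops})$ in \textsc{Apply-Remote}), a causally-maximal operation $o$ exists in the shared set; using the commutativity lemma I can bubble $o$ to the end of both histories without changing the final state, then strip it off and apply the induction hypothesis to the remaining set of operations, which is itself a valid smaller operation set with consistent histories at both replicas. This cleanly separates the combinatorial reordering argument from the semantic content.

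**The core semantic work** lives in the commutativity lemma, and I expect this to be the main obstacle. I would prove it by structural induction on the cursor's key vector. The \textsc{Descend} case is largely mechanical: two concurrent operations descending into the tree either take disjoint paths (trivially commute) or share a prefix, in which case I recurse into the shared child and observe that the $\mathsf{addId}$ side-effects on presence sets are just set unions, which commute. The genuinely delicate cases are the leaf mutations. For list insertion, $\textsc{Insert}_{1,2}$ must place two concurrently-inserted elements at the same position in an order determined solely by the total order $<$ on Lamport timestamps—I would argue that the linked-list walk deterministically inserts the higher-timestamped element first regardless of application order, which is exactly why the RGA-style comparison against $<$ is used. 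For assignment interacting with clearing, I must check that $\mathsf{clear}$ and $\mathsf{addId}$ interact correctly: a clear operation removes from a register exactly those IDs in its $\mathit{deps}$, so a concurrent assignment (whose ID is \emph{not} in that $\mathit{deps}$, by concurrency) survives the clear, and symmetrically the clear does not remove the concurrent assignment's effect—making the two outcomes independent of order.

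**The hardest subcase** will be verifying order-independence of the interaction between assignment/deletion and the recursive clearing rules (\textsc{Clear-Map}, \textsc{Clear-List}, \textsc{Clear-Elem}), since these rewrite presence sets throughout a subtree. I would prove an auxiliary invariant characterizing the presence set $\mathsf{pres}(k)$ after any history as exactly the set of operation IDs that (i) assert the element's existence and (ii) are not causally dominated by a later clearing operation in the history. Phrased this way, the presence set is a function of the operation set alone and not the order—because ``is in $\mathit{deps}$ of'' is a fixed relation on the operation set—which is precisely the convergence we need at each node. Establishing this characterization, and confirming that every rule in Figures~\ref{fig:operation-rules} and~\ref{fig:clear-rules} maintains it, is where the bulk of the careful bookkeeping will go.
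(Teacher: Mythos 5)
Your proposal is correct and follows essentially the same route as the paper's own proof: both reduce convergence to pairwise commutativity of concurrent operations (with the identical case analysis---RGA-style resolution of concurrent list insertions via the Lamport-timestamp order in $\textsc{Insert}_{1,2}$, and the observation that \textsc{Clear-*} removes exactly the IDs in $\mathit{deps}$ so concurrent assignments survive, with presence-set updates commuting as set unions), differing only in bookkeeping---you peel off a causally-maximal operation and bubble it to the end of both histories by adjacent transpositions of concurrent operations, whereas the paper runs the induction forward, inserting operation $o_{n+1}$ at any causally-ready position and proving (Lemma~\ref{lem:op-commute}) that the insertion position is immaterial; the two inductions are equivalent. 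Your auxiliary order-independent characterization of $\mathsf{pres}(k)$ is a somewhat heavier tool than the paper's direct rule-by-rule commutation argument for the clearing rules, but it would serve the same purpose.
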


\begin{proof}
Consider the histories $H_p$ and $H_q$ at $p$ and $q$ respectively (see Definition~\ref{def:history}). The rules \textsc{Apply-Local} and \textsc{Apply-Remote} maintain the invariant that an operation is added to $A_p(\mathsf{ops})$ or $A_q(\mathsf{ops})$ if and only if it was applied to the document state at $p$ or $q$. Thus, $A_p(\mathsf{ops}) = A_q(\mathsf{ops})$ iff $H_p$ and $H_q$ contain the same set of operations (potentially ordered differently).

The history $H_p$ at replica $p$ is a sequence of $n$ operations: $H_p = \langle o_1, \dots, o_n \rangle$, and the document state at $p$ is derived from $H_p$ by starting in the empty state and applying the operations in order. Likewise, the document state at $q$ is derived from $H_q$, which is a permutation of $H_p$. Both histories must be consistent with causality, i.e. for all $i$ with $1 \le i \le n$, we require $o_i.\mathit{deps} \subseteq \{o_j.\mathit{id} \mid 1 \le j < i\}$. The causality invariant is maintained by the \textsc{Apply-*} rules.

We can prove the theorem by induction over the length of history $n$.

\emph{Base case:} An empty history with $n=0$ describes the empty document state. The empty document is always the same, and so any two replicas that have not executed any operations are by definition in the same state.

\emph{Induction step:} Given histories $H_p$ and $H_q$ of length $n$, such that $H_p = \langle o_1, \dots, o_n \rangle$ and $H_q$ is a permutation of $H_p$, and such that applying $H_p$ results in the same document state as applying $H_q$, we can construct new histories $H_p'$ and $H_q'$ of length $n+1$ by inserting a new operation $o_{n+1}$ at any causally ready position in $H_p$ or $H_q$ respectively. We must then show that for all the histories $H_p'$ and $H_q'$ constructed this way, applying the sequence of operations in order results in the same document state.

In order to prove the induction step, we examine the insertion of $o_{n+1}$ into $H_p$ and $H_q$. Each history can be split into a prefix, which is the minimal subsequence $\langle o_1, \dots, o_j \rangle$ such that $o_{n+1}.\mathit{deps} \subseteq \{o_1.\mathit{id}, \dots, o_j.\mathit{id}\}$, and a suffix, which is the remaining subsequence $\langle o_{j+1}, \dots, o_n \rangle$. The prefix contains all operations that causally precede $o_{n+1}$, and possibly some operations that are concurrent with $o_{n+1}$; the suffix contains only operations that are concurrent with $o_{n+1}$. The earliest position where $o_{n+1}$ can be inserted into the history is between the prefix and the suffix; the latest position is at the end of the suffix; or it could be inserted at any point within the suffix.

We need to show that the effect on the document state is the same, regardless of the position at which $o_{n+1}$ is inserted, and regardless of whether it is inserted into $H_p$ or $H_q$. We do this in Lemma~\ref{lem:op-commute} by showing that $o_{n+1}$ is commutative with respect to all operations in the suffix, i.e. with respect to any operations that are concurrent to $o_{n+1}$.
\end{proof}

Before we can prove the commutativity of operations, we must first define some more terms and prove some preliminary lemmas.

\begin{definition}[appearing after]
In the ordered list $\mathit{ctx}$, list element $k_j$ \emph{appears after} list element $k_1$ if there exists a (possibly empty) sequence of list elements $k_2, \dots, k_{j-1}$ such that for all $i$ with $1 \le i < j$, $\mathit{ctx}(\mathsf{next}(k_i)) = k_{i+1}$. Moreover, we say $k_j$ appears \emph{immediately after} $k_1$ if that sequence is empty, i.e. if $\mathit{ctx}(\mathsf{next}(k_1)) = k_j$.
\end{definition}

The definition of \emph{appearing after} corresponds to the order in which the \textsc{Idx} rules iterate over the list.

\begin{lemma}\label{lem:list-after}
If $k_2$ appears after $k_1$ in an ordered list, and the list is mutated according to the evaluation rules, $k_2$ also appears after $k_1$ in all later document states.
\end{lemma}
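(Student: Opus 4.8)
The plan is to observe that the \emph{appears after} relation is determined entirely by the $\mathsf{next}$ bindings stored in the context of the list in question, and then to verify that essentially no rule ever deletes or redirects an existing $\mathsf{next}$ edge; the single exception is insertion, which only splices a fresh element into the chain and therefore preserves reachability. Since any later document state is obtained from the current one by applying a finite sequence of operations, it suffices, by induction on the number of applied operations, to show that a single mutation preserves the relation and then to chain the result along the history. The base case (zero further operations) is immediate.

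First I would classify the rules of Figures~\ref{fig:operation-rules} and~\ref{fig:clear-rules} by their effect on map entries of the form $\mathsf{next}(k)$ inside the relevant list's context. The \textsc{Descend} rule only recurses into a child and invokes $\textsc{Add-ID}$, which touches presence sets alone; the \textsc{Assign}, \textsc{Empty-Map} and \textsc{Empty-List} rules, together with the entire \textsf{clear} family, modify only presence sets and register contents. In particular, deletion and overwriting leave deleted elements in place as tombstones: $\textsc{Clear-List}_{1,2,3}$ walk the linked list by following $\mathsf{next}$ but never rewrite a $\mathsf{next}$ binding, so the linked-list skeleton, and hence the \emph{appears after} relation, is untouched by any clearing operation. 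Mutations directed at a different list, or at an unrelated part of the tree, likewise never write to this list's $\mathsf{next}$ map.

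The only rules that write a $\mathsf{next}$ binding are the insertion rules. $\textsc{Insert}_2$ merely recurses deeper to locate the splice point and alters no binding at the current level, so the crux is $\textsc{Insert}_1$: given the current successor $\mathit{next}$ of $\mathit{prev}$, it replaces the single edge $\mathit{prev} \to \mathit{next}$ by the two edges $\mathit{prev} \to \mathit{id}$ and $\mathit{id} \to \mathit{next}$. I would then argue that this preserves the relation: take any witnessing chain $k_1 = m_0, m_1, \dots, m_\ell = k_2$ with $\mathit{ctx}(\mathsf{next}(m_i)) = m_{i+1}$. If no index $i$ has $m_i = \mathit{prev}$ and $m_{i+1} = \mathit{next}$, every edge of the chain survives verbatim; otherwise the unique such edge is subdivided into $m_i = \mathit{prev} \to \mathit{id} \to \mathit{next} = m_{i+1}$. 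In both cases a valid chain from $k_1$ to $k_2$ remains (with at most the one extra element $\mathit{id}$ interposed), so $k_2$ still appears after $k_1$.

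The main obstacle will be the exhaustive case analysis establishing the first claim, namely that the whole \textsf{clear} machinery, despite recursively rewriting large parts of a subtree, never disturbs a $\mathsf{next}$ binding and so preserves the tombstoned linked-list structure. Once that is in hand, the insertion case reduces to the short reachability argument above, and the induction over the sequence of subsequent mutations closes the lemma.
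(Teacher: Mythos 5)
Your proposal is correct and takes essentially the same approach as the paper's proof: the paper likewise observes that $\textsc{Insert}_1$ is the only rule writing $\mathsf{next}$ bindings (all clearing/deletion rules only touch presence sets and registers, leaving tombstones in the chain) and that splicing a fresh element between two existing ones preserves appears-after. You simply make explicit the rule-by-rule classification and the chain-subdivision argument that the paper's three-sentence proof asserts without detail.
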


\begin{proof}
The only rule that modifies the \textsf{next} pointers in the context is $\textsc{Insert}_1$, and it inserts a new list element between two existing list elements (possibly \textsf{head} and/or \textsf{tail}). This modification preserves the appears-after relationship between any two existing list elements. Since no other rule affects the list order, appears-after is always preserved.
\end{proof}

Note that deletion of an element from a list does not remove it from the sequence of \textsf{next} pointers, but only clears its presence set $\mathsf{pres}(k)$.

\begin{lemma}\label{lem:insert-between}
If one replica inserts a list element $k_\mathit{new}$ between $k_1$ and $k_2$, i.e. if $k_\mathit{new}$ appears after $k_1$ in the list and $k_2$ appears after $k_\mathit{new}$ in the list on the source replica after applying \textsc{Apply-Local}, then $k_\mathit{new}$ appears after $k_1$ and $k_2$ appears after $k_\mathit{new}$ on every other replica where that operation is applied.
\end{lemma}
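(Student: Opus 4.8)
The plan is to pin down, in a replica-independent way, exactly where the \textsc{Insert}$_1$ and \textsc{Insert}$_2$ rules splice $k_{\mathit{new}}$ into the list, and then read off the two required orderings from that characterisation. The operation carries a fixed cursor $\mathit{prev}$ and a fixed, globally unique identifier $k_{\mathit{new}} = \mathit{id}$; the only data on which \textsc{Insert}$_{1,2}$ depend are the chain of $\mathsf{next}$ pointers starting at $\mathit{prev}$ and comparisons of identifiers under the Lamport order $<$, both of which are identical on every replica. Concretely, \textsc{Insert}$_2$ advances the reference along the $\mathsf{next}$ chain exactly as long as the following element has an identifier greater than $k_{\mathit{new}}$, and \textsc{Insert}$_1$ then splices $k_{\mathit{new}}$ in at the first point where the following element is $\mathsf{tail}$ or has an identifier smaller than $k_{\mathit{new}}$. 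Hence $k_{\mathit{new}}$ is inserted immediately after some element $c$ that equals $\mathit{prev}$ or appears after $\mathit{prev}$, every element strictly between $\mathit{prev}$ and $k_{\mathit{new}}$ has an identifier greater than $k_{\mathit{new}}$, and the element immediately following $k_{\mathit{new}}$ is $\mathsf{tail}$ or has an identifier smaller than $k_{\mathit{new}}$.

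First I would record the one-sided fact that $k_{\mathit{new}}$ always appears after $\mathit{prev}$: by induction on the number of \textsc{Insert}$_2$ steps, the scan only ever moves forward along existing $\mathsf{next}$ pointers before \textsc{Insert}$_1$ fires, so the splice point $c$ is $\mathit{prev}$ or appears after it, and therefore $k_{\mathit{new}}$ appears after $\mathit{prev}$; by Lemma~\ref{lem:list-after} this persists in all later states on that replica. Next I would establish the two halves of the conclusion for the fixed pair $k_1, k_2$, where $k_{\mathit{new}}$ appears after $k_1$ and $k_2$ appears after $k_{\mathit{new}}$ on the source replica. For ``$k_{\mathit{new}}$ appears after $k_1$'': if $k_1$ equals $\mathit{prev}$ or appears before it, the claim follows by transitivity through $\mathit{prev}$, appealing to the order among elements already present before $k_{\mathit{new}}$ (which is exactly what the surrounding induction of Theorem~\ref{thm:convergence} keeps consistent, together with Lemma~\ref{lem:list-after}); if instead $k_1$ appears strictly between $\mathit{prev}$ and $k_{\mathit{new}}$ on the source, then by the characterisation above $k_1$ has an identifier greater than $k_{\mathit{new}}$, so wherever $k_1$ is present the scan on the target replica likewise skips it and places $k_{\mathit{new}}$ after it. Symmetrically, $k_2$ either already lies beyond the splice barrier or has an identifier smaller than $k_{\mathit{new}}$, so the scan stops before reaching it and $k_2$ appears after $k_{\mathit{new}}$.

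The step I expect to carry the real weight is that the two replicas need not agree on which \emph{other} concurrent elements sit around $\mathit{prev}$, so the element that happens to land immediately after $k_{\mathit{new}}$ may differ between them; I must argue that for the \emph{fixed} element $k_1$ (respectively $k_2$) the side of $k_{\mathit{new}}$ on which it falls is decided solely by its identifier relative to $k_{\mathit{new}}$ and by its position relative to $\mathit{prev}$ — both replica-independent — and never by the presence or absence of the other concurrent siblings. This is precisely where the Lamport order $<$ is essential: it is a single total order on identifiers, so the tie-break performed by the $<$ comparisons in \textsc{Insert}$_{1,2}$ is the same computation everywhere. Combining this with the monotonicity of Lemma~\ref{lem:list-after} — once two elements are ordered within a replica that order is permanent — lets me conclude that later concurrent insertions can never reorder the triple $k_1, k_{\mathit{new}}, k_2$, so the between-ness established on the source replica after \textsc{Apply-Local} holds on every replica where the operation is applied.
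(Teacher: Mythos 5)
Your reading of the $\textsc{Insert}_{1,2}$ scan is accurate, and your overall strategy (characterise the splice point, then use the Lamport comparison plus Lemma~\ref{lem:list-after}) is the same as the paper's. But there is a genuine gap: you never invoke the one fact the paper's proof pivots on, namely that \textsc{Make-Op} gives $k_\mathit{new}$ a counter strictly greater than every counter in $A_p(\mathsf{ops})$, and that every element already present in the source's list has its ID in $A_p(\mathsf{ops})$ (both \textsc{Apply-Local} and \textsc{Apply-Remote} record applied IDs). From this maximality the paper gets two load-bearing consequences: (i) on the source replica the splice happens \emph{immediately} after the cursor element, so no element lies strictly between $\mathit{prev}$ and $k_\mathit{new}$ there; and (ii) every element appearing after $k_\mathit{new}$ on the source --- in particular $k_2$ --- satisfies $k_2 < k_\mathit{new}$ or $k_2 = \mathsf{tail}$, so $\textsc{Insert}_2$ can never skip past it on any replica.

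Without (i), your case ``$k_1$ appears strictly between $\mathit{prev}$ and $k_\mathit{new}$ on the source'' is handled by an unsound inference. You argue that since $k_1 > k_\mathit{new}$, the target's scan ``likewise skips it and places $k_\mathit{new}$ after it.'' That does not follow: the target replica may contain a \emph{concurrent} element $e$ with $e < k_\mathit{new}$ sitting between $\mathit{prev}$ and $k_1$ in its chain (such an $e$ need not exist on the source at all); the scan then stops at $e$ by $\textsc{Insert}_1$ and splices $k_\mathit{new}$ \emph{before} $k_1$, contradicting your conclusion. The lemma survives only because this case is vacuous --- which is exactly consequence (i) --- but establishing that vacuity requires the \textsc{Make-Op} maximality argument you omit. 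The $k_2$ direction has a milder version of the same problem: your splice characterisation bounds only the identifier of the element \emph{immediately} following $k_\mathit{new}$, so for a $k_2$ lying further beyond that barrier you must either route the argument explicitly through the barrier element (its small identifier, plus order preservation of causally prior elements on the target) or, as the paper does, derive $k_2 < k_\mathit{new} \vee k_2 = \mathsf{tail}$ directly from maximality; as written, ``the scan stops before reaching it'' is asserted rather than proved.
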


\begin{proof}
The rules for generating list operations ensure that $k_1$ is either \textsf{head} or an operation identifier, and $k_2$ is either \textsf{tail} or an operation identifier.

When the insertion operation is generated using the \textsc{Make-Op} rule, its operation identifier is given a counter value $\mathit{ctr}$ that is greater than the counter of any existing operation ID in $A_p(\mathsf{ops})$. If $k_2$ is an operation identifier, we must have $k_2 \in A_p(\mathsf{ops})$, since both \textsc{Apply-Local} and \textsc{Apply-Remote} add operation IDs to $A_p(\mathsf{ops})$ when applying an insertion. Thus, either $k_2 < k_\mathit{new}$ under the ordering relation $<$ for Lamport timestamps, or $k_2 = \mathsf{tail}$.

When the insertion operation is applied on another replica using \textsc{Apply-Remote} and $\textsc{Insert}_{1,2}$, $k_2$ appears after $k_1$ on that replica (by Lemma~\ref{lem:list-after} and causality). The cursor of the operation is $\mathsf{cursor}(\langle\dots\rangle, k_1)$, so the rules start iterating the list at $k_1$, and therefore $k_\mathit{new}$ is inserted at some position after $k_1$.

If other concurrent insertions occurred between $k_1$ and $k_2$, their operation ID may be greater than or less than $k_\mathit{new}$, and thus either $\textsc{Insert}_1$ or $\textsc{Insert}_2$ may apply. In particular, $\textsc{Insert}_2$ skips over any list elements whose Lamport timestamp is greater than $k_\mathit{new}$. However, we know that $k_2 < k_\mathit{new} \vee k_2 = \mathsf{tail}$, and so $\textsc{Insert}_1$ will apply with $\mathit{next}=k_2$ at the latest. The $\textsc{Insert}_{1,2}$ rules thus never iterate past $k_2$, and thus $k_\mathit{new}$ is never inserted at a list position that appears after $k_2$.
\end{proof}

\begin{definition}[common ancestor]\label{def:common-ancestor}
In a history $H$, the \emph{common ancestor} of two concurrent operations $o_r$ and $o_s$ is the latest document state that causally precedes both $o_r$ and $o_s$.
\end{definition}

The common ancestor of $o_r$ and $o_s$ can be defined more formally as the document state resulting from applying a sequence of operations $\langle o_1, \dots, o_j \rangle$ that is the shortest prefix of $H$ that satisfies $(o_r.\mathit{deps} \cap o_s.\mathit{deps}) \subseteq \{o_1.\mathit{id}, \dots, o_j.\mathit{id}\}$.

\begin{definition}[insertion interval]\label{def:insert-interval}
Given two concurrent operations $o_r$ and $o_s$ that insert into the same list, the \emph{insertion interval} of $o_r$ is the pair of keys $(k_r^\mathrm{before}, k_r^\mathrm{after})$ such that $o_r.\mathit{id}$ appears after $k_r^\mathrm{before}$ when $o_r$ has been applied, $k_r^\mathrm{after}$ appears after $o_r.\mathit{id}$ when $o_r$ has been applied, and $k_r^\mathrm{after}$ appears immediately after $k_r^\mathrm{before}$ in the common ancestor of $o_r$ and $o_s$. The insertion interval of $o_s$ is the pair of keys $(k_s^\mathrm{before}, k_s^\mathrm{after})$ defined similarly.
\end{definition}

It may be the case that $k_r^\mathrm{before}$ or $k_s^\mathrm{before}$ is \textsf{head}, and that $k_r^\mathrm{after}$ or $k_s^\mathrm{after}$ is \textsf{tail}.

\begin{lemma}\label{lem:insert-conflict}
For any two concurrent insertion operations $o_r, o_s$ in a history $H$, if $o_r.\mathit{cur} = o_s.\mathit{cur}$, then the order at which the inserted elements appear in the list after applying $H$ is deterministic and independent of the order of $o_r$ and $o_s$ in $H$.
\end{lemma}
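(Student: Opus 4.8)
The plan is to exploit the total order $<$ on Lamport timestamps to pin down the relative order of the two inserted elements. Since timestamps are unique, I assume without loss of generality that $o_s.\mathit{id} < o_r.\mathit{id}$. Both operations share the same cursor, so write $\mathit{prev}$ for the list element referenced by $o_r.\mathit{cur} = o_s.\mathit{cur}$; the $\textsc{Insert}_{1,2}$ rules always begin their scan at $\mathit{prev}$ and only move forward along $\mathsf{next}$ pointers, so each inserted element comes to rest somewhere after $\mathit{prev}$, and by Lemma~\ref{lem:list-after} it stays there. I would prove that in the document state after applying all of $H$ the element $o_r.\mathit{id}$ always appears before $o_s.\mathit{id}$, which immediately gives the claimed determinism and independence from the order of $o_r$ and $o_s$ in $H$.

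The heart of the argument is the invariant that, at every point during the reduction of $H$, every list element lying strictly between $\mathit{prev}$ and $o_r.\mathit{id}$ carries a timestamp greater than $o_r.\mathit{id}$. This holds at the moment $o_r$ is inserted: by the premises of $\textsc{Insert}_2$, the scan from $\mathit{prev}$ skips exactly those elements whose timestamp exceeds $o_r.\mathit{id}$ and stops (via $\textsc{Insert}_1$) at the first element whose timestamp is smaller, or at $\mathsf{tail}$, so every element it skipped --- precisely those now lying between $\mathit{prev}$ and $o_r.\mathit{id}$ --- has a larger timestamp. I would then show the invariant is preserved by every subsequent insertion: an operation with timestamp below $o_r.\mathit{id}$ that scans into this region is forced by $\textsc{Insert}_2$ to skip past $o_r.\mathit{id}$ (since $o_r.\mathit{id}$ is the larger of the two), so it cannot land strictly before $o_r.\mathit{id}$, whereas an operation with timestamp above $o_r.\mathit{id}$ may land in the region but only contributes an element with a larger timestamp, so the invariant survives.

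With the invariant in hand the conclusion is immediate: because $o_s.\mathit{id} < o_r.\mathit{id}$, the element $o_s.\mathit{id}$ cannot occupy a position strictly between $\mathit{prev}$ and $o_r.\mathit{id}$; yet Lemma~\ref{lem:insert-between} guarantees that $o_s.\mathit{id}$ appears after $\mathit{prev}$, so it must appear after $o_r.\mathit{id}$. As the relative order of two existing elements is never subsequently disturbed (again Lemma~\ref{lem:list-after}), the final document state always lists $o_r.\mathit{id}$ ahead of $o_s.\mathit{id}$, irrespective of the order in which the two operations occur in $H$.

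I expect the main obstacle to be the preservation half of the invariant, namely ruling out that a later concurrent insertion --- one whose own cursor points at some element inside the region rather than at $\mathit{prev}$ --- sneaks a small-timestamp element between $\mathit{prev}$ and $o_r.\mathit{id}$. The resolution is that such an insertion still advances along $\mathsf{next}$ pointers and compares against $o_r.\mathit{id}$ upon reaching it; any timestamp below $o_r.\mathit{id}$ triggers $\textsc{Insert}_2$ and is carried past $o_r.\mathit{id}$, so the element lands outside the region. Making this step airtight requires combining the scan semantics of $\textsc{Insert}_{1,2}$ with Lemma~\ref{lem:insert-between} to confine where each concurrent element can come to rest.
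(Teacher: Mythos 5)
Your proof is correct, and it rests on the same key mechanism as the paper's: under the rules $\textsc{Insert}_{1,2}$ a scan skips exactly those elements whose timestamps exceed the inserting operation's ID and stops before the first smaller one, so the element with the greater Lamport timestamp always comes to rest closer to the cursor position; like the paper, you conclude that the relative order of the two inserted elements is determined purely by their IDs. The structural route is genuinely different, though. The paper argues by a two-case analysis on which of $o_r$, $o_s$ is applied first in $H$: in one case the later scan cannot skip past the already-present smaller-ID element, and in the other it must skip past the greater-ID element together with everything between the cursor and it. You instead prove an invariant maintained throughout the reduction --- every element strictly between $\mathit{prev}$ and the greater-ID element carries a timestamp exceeding that ID --- establish it at the moment of insertion, and show it is preserved by all later insertions, from which both application orders are handled uniformly without a case split. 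Your approach is longer but buys something concrete: the paper's second case leans on the bare assertion that ``any list elements that appear between $o_s.\mathit{cur}$ and $o_s$ at the time of inserting $o_r$ must have a Lamport timestamp greater than $o_s.\mathit{id}$,'' and your establishment-plus-preservation argument is precisely a proof of that assertion, generalized to arbitrary interleavings of other concurrent insertions, including ones whose cursors point inside the protected region. That order-independent, invariant form of the claim is also closer to what Lemma~\ref{lem:insert-reorder} needs when it chases chains of cursor references. Two small tightenings: the closing appeal to Lemma~\ref{lem:insert-between} is unnecessary, since the fact that $o_s.\mathit{id}$ lands and stays after $\mathit{prev}$ already follows from the forward-only scan together with Lemma~\ref{lem:list-after}, as you note earlier; and you should state explicitly that deletions never modify $\mathsf{next}$ pointers (they only empty presence sets), so insertions are indeed the only reduction steps that could threaten your invariant.
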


\begin{proof}
Without loss of generality, assume that $o_r.\mathit{id} < o_s.\mathit{id}$ according to the ordering relation on Lamport timestamps. (If the operation ID of $o_r$ is greater than that of $o_s$, the two operations can be swapped in this proof.) We now distinguish the two possible orders of applying the operations:

\begin{enumerate}
\item $o_r$ is applied before $o_s$ in $H$. Thus, at the time when $o_s$ is applied, $o_r$ has already been applied. When applying $o_s$, since $o_r$ has a lesser operation ID, the rule $\textsc{Insert}_1$ applies with $\mathit{next} = o_r.\mathit{id}$ at the latest, so the insertion position of $o_s$ must appear before $o_r$. It is not possible for $\textsc{Insert}_2$ to skip past $o_r$.

\item $o_s$ is applied before $o_r$ in $H$. Thus, at the time when $o_r$ is applied, $o_s$ has already been applied. When applying $o_r$, the rule $\textsc{Insert}_2$ applies with $\mathit{next} = o_s.\mathit{id}$, so the rule skips past $o_s$ and inserts $o_r$ at a position after $o_s$. Moreover, any list elements that appear between $o_s.\mathit{cur}$ and $o_s$ at the time of inserting $o_r$ must have a Lamport timestamp greater than $o_s.\mathit{id}$, so $\textsc{Insert}_2$ also skips over those list elements when inserting $o_r$. Thus, the insertion position of $o_r$ must be after $o_s$.
\end{enumerate}

Thus, the insertion position of $o_r$ appears after the insertion position of $o_s$, regardless of the order in which the two operations are applied. The ordering depends only on the operation IDs, and since these IDs are fixed at the time the operations are generated, the list order is determined by the IDs.
\end{proof}

\begin{lemma}\label{lem:insert-reorder}
In an operation history $H$, an insertion operation is commutative with respect to concurrent insertion operations to the same list.
\end{lemma}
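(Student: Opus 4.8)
The plan is to show that for any two concurrent insertions $o_r$ and $o_s$ into the same list, the document state after applying $o_r$ then $o_s$ equals the state after applying $o_s$ then $o_r$. First I would dispose of the parts that commute for trivial reasons. Each insertion ultimately invokes the \textsc{Assign} machinery at a fresh key ($o_r.\mathit{id}$, respectively $o_s.\mathit{id}$); since these keys are distinct Lamport timestamps, the two nested assignments write disjoint entries of the context and commute, and the $\mathsf{pres}$-set updates performed by \textsc{Descend} and $\textsc{Add-ID}_1$ along the shared path down to the list are unions with the singletons $\{o_r.\mathit{id}\}$ and $\{o_s.\mathit{id}\}$, which also commute. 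This isolates the genuinely interacting part: the $\mathsf{next}$-pointer structure of the linked list, which among the insertion rules is modified only by $\textsc{Insert}_1$ via the splice $\mathsf{next}(\mathit{prev}) \mapsto \mathit{id}$, $\mathsf{next}(\mathit{id}) \mapsto \mathit{next}$.

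The key tool would be Lemma~\ref{lem:insert-between}, which fixes the position of each inserted element relative to every pre-existing list element: $o_r.\mathit{id}$ always falls strictly inside its insertion interval $(k_r^\mathrm{before}, k_r^\mathrm{after})$ and $o_s.\mathit{id}$ inside $(k_s^\mathrm{before}, k_s^\mathrm{after})$, independent of application order and of the other operation. The only remaining degree of freedom is therefore the mutual ordering of $o_r.\mathit{id}$ and $o_s.\mathit{id}$, and only when their intervals can overlap, so I would split on the insertion intervals. If they are disjoint gaps of the common ancestor, the two splices act on separated regions of the list and, by Lemma~\ref{lem:insert-between}, neither scan in $\textsc{Insert}_{1,2}$ ever reaches the other element, so the pointer updates commute immediately. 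If the intervals coincide and the cursors are equal ($o_r.\mathit{cur} = o_s.\mathit{cur}$), Lemma~\ref{lem:insert-conflict} already fixes the order of the two elements as a deterministic function of their IDs; here I would simply verify that the resulting pair of $\mathsf{next}$ edges is identical in both orders of application.

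The hard part will be the remaining sub-case: the two elements compete for the same gap even though their cursors differ, which arises precisely when one insertion is anchored, through its cursor, at a list element that is itself concurrent with the other insertion. Here Lemma~\ref{lem:insert-conflict} does not apply directly, and I must instead reason about the RGA scan from each distinct anchor. The argument would mirror the case analysis of Lemma~\ref{lem:insert-conflict}: in either application order, the scan starting from a cursor skips exactly those elements whose Lamport timestamp exceeds the id being inserted ($\textsc{Insert}_2$) and stops at the first strictly smaller successor or at $\mathsf{tail}$ ($\textsc{Insert}_1$), so the landing position of each element is pinned down by timestamp comparisons alone; the interval bounds supplied by Lemma~\ref{lem:insert-between} ensure that neither scan crosses its $k^\mathrm{after}$, keeping the competition confined to the shared gap. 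Showing that this timestamp-driven placement induces the same total order on the gap's contents whether $o_r$ or $o_s$ is inserted first is the crux, and I expect it to demand the most care.
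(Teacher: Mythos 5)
Your decomposition matches the paper's exactly: different insertion intervals disposed of via Lemma~\ref{lem:insert-between}, same interval with equal cursors via Lemma~\ref{lem:insert-conflict}, and the same observation that presence-set unions and register writes at distinct fresh ids commute trivially. But in the one sub-case that requires a new idea~-- same interval, different cursors~-- you stop precisely where the proof has to happen: you describe the mechanics of the $\textsc{Insert}_{1,2}$ scan and then concede that showing the placement is order-independent ``is the crux.'' That concession is a genuine gap, not a deferral of routine detail. Worse, your intermediate claim that ``the landing position of each element is pinned down by timestamp comparisons alone'' is not a correct characterization of RGA placement: the scan's stopping point depends on which elements are already present in the gap, which differs between the two application orders, and the final order of two elements in the gap is \emph{not} in general given by comparing their own timestamps. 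For example, an insertion with timestamp $7$ anchored at $k^\mathrm{before}$ lands \emph{before} an entire chain headed by an element with timestamp $5$, even if that chain contains members with timestamp $10$; so the element with timestamp $7$ ends up preceding one with timestamp $10$, contrary to pairwise timestamp order.

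The missing idea, which is the heart of the paper's argument, is to follow the chain of cursor references. If an operation's cursor refers to an element strictly inside the gap, that element was inserted by a prior operation whose cursor refers to another such element, and so on; after finitely many steps this chain reaches an operation $o_\mathrm{first}$ whose cursor is $k^\mathrm{before}$. Every operation in the chain is causally dependent on $o_\mathrm{first}$ and therefore carries a Lamport timestamp strictly greater than $o_\mathrm{first}.\mathit{id}$. Consequently the scan treats each chain as a block: if $\textsc{Insert}_2$ skips the element inserted by $o_\mathrm{first}$, it also skips every element causally dependent on it, and if $\textsc{Insert}_1$ places a new element before $o_\mathrm{first}$'s element, it places it before the whole chain. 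The relative order of the two concurrent insertions is then determined by the timestamps of the \emph{heads} of their respective chains (the first insertions into the gap after the common ancestor), a quantity fixed when the operations are generated and hence independent of application order. Without this chain/block argument, your final sub-case cannot be closed, and the lemma remains unproved.
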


\begin{proof}
Given any two concurrent insertion operations $o_r, o_s$ in $H$, we must show that the document state does not depend on the order in which $o_r$ and $o_s$ are applied.

Either $o_r$ and $o_s$ have the same insertion interval as defined in Definition~\ref{def:insert-interval}, or they have different insertion intervals. If the insertion intervals are different, then by Lemma~\ref{lem:insert-between} the operations cannot affect each other, and so they have the same effect regardless of their order. So we need only analyze the case in which they have the same insertion interval $(k^\mathrm{before}, k^\mathrm{after})$.

If $o_r.\mathit{cur} = o_s.\mathit{cur}$, then by Lemma~\ref{lem:insert-conflict}, the operation with the greater operation ID appears first in the list, regardless of the order in which the operations are applied. If $o_r.\mathit{cur} \not= o_s.\mathit{cur}$, then one or both of the cursors must refer to a list element that appears between $k^\mathrm{before}$ and $k^\mathrm{after}$, and that did not yet exist in the common ancestor (Definition~\ref{def:common-ancestor}).

Take a cursor that differs from $k^\mathrm{before}$: the list element it refers to was inserted by a prior operation, whose cursor in turn refers to another prior operation, and so on. Following this chain of cursors for a finite number of steps leads to an operation $o_\mathrm{first}$ whose cursor refers to $k^\mathrm{before}$ (since an insertion operation always inserts at a position after the cursor).

Note that all of the operations in this chain are causally dependent on $o_\mathrm{first}$, and so they must have a Lamport timestamp greater than $o_\mathrm{first}$. Thus, we can apply the same argument as in Lemma~\ref{lem:insert-conflict}: if $\textsc{Insert}_2$ skips over the list element inserted by $o_\mathrm{first}$, it will also skip over all of the list elements that are causally dependent on it; if $\textsc{Insert}_1$ inserts a new element before $o_\mathrm{first}$, it is also inserted before the chain of operations that is based on it.

Therefore, the order of $o_r$ and $o_s$ in the final list is determined by the Lamport timestamps of the first insertions into the insertion interval after their common ancestor, in the chains of cursor references of the two operations. Since the argument above applies to all pairs of concurrent operations $o_r, o_s$ in $H$, we deduce that the final order of elements in the list depends only on the operation IDs but not the order of application, which shows that concurrent insertions to the same list are commutative.
\end{proof}

\begin{lemma}\label{lem:delete-commute}
In a history $H$, a deletion operation is commutative with respect to concurrent operations.
\end{lemma}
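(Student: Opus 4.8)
The plan is to prove commutativity by a case analysis on the type and target of the concurrent operation $o$, exploiting one fact throughout: since $o$ and the deletion $o_d = \mathsf{op}(\mathit{id}_d, \mathit{deps}_d, \mathit{cur}_d, \mathsf{delete})$ are concurrent, $o.\mathit{id} \notin \mathit{deps}_d$ and $\mathit{id}_d \notin o.\mathit{deps}$. First I would pin down exactly what a deletion does. The \textsc{Delete} rule delegates to \textsc{Clear-Elem}, which through \textsc{Clear-Any} recursively visits the \textsf{regT}, \textsf{mapT} and \textsf{listT} children at the target key and all their nested contents. None of these rules removes a key from the domain of the context nor rewrites any $\mathsf{next}$ pointer: \textsc{Clear-Reg} merely discards register entries whose IDs lie in $\mathit{deps}_d$, while \textsc{Clear-Elem}, \textsc{Clear-Map} and \textsc{Clear-List} only replace each presence set by $(\mathit{pres}_1 \cup \mathit{pres}_2) \setminus \mathit{deps}_d$. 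Thus the entire observable effect of $o_d$ is to subtract the IDs in $\mathit{deps}_d$ from a family of presence sets and to drop the register entries keyed by those IDs; by Lemma~\ref{lem:list-after} it leaves the list order untouched.

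With this characterisation, the disjoint cases are immediate. If $o$ writes in a subtree unrelated to $o_d$'s target, the two operations touch disjoint portions of the state, except possibly for presence sets along a shared path prefix; but a deletion does not modify those (its $\mathsf{addId}$ is a no-op by $\textsc{Add-ID}_2$), and a non-deletion $o$ only adds its own ID there, so the two commute trivially. The substantive cases are those in which $o$ writes at, or strictly inside, the node that $o_d$ clears: $o$ an assignment, an insertion, or another deletion.

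In each substantive case I would reduce the claim to a set-algebra identity. An assignment or insertion $o$ contributes $o.\mathit{id}$ to presence sets and a register entry (or list element) keyed by $o.\mathit{id}$; since $o.\mathit{id} \notin \mathit{deps}_d$, the subtraction performed by $o_d$ never removes this contribution in either order, and because $(X \cup \{o.\mathit{id}\}) \setminus \mathit{deps}_d = (X \setminus \mathit{deps}_d) \cup \{o.\mathit{id}\}$, the union done by $o$ and the subtraction done by $o_d$ permute. Dually, the IDs removed by $o_d$ all lie in $\mathit{deps}_d$, which $o$ neither adds nor re-protects, so the final register content at a shared key is the order-independent set $\{\, i \mapsto v_i \in R : i \notin o.\mathit{deps} \wedge i \notin \mathit{deps}_d \,\} \cup \{ o.\mathit{id} \mapsto v \}$, where $R$ is the content before the pair. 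When $o$ is itself a deletion, both operations only subtract, from dependency sets $o.\mathit{deps}$ and $\mathit{deps}_d$, and set difference commutes outright since $X \setminus o.\mathit{deps} \setminus \mathit{deps}_d = X \setminus \mathit{deps}_d \setminus o.\mathit{deps}$. Insertion is handled via Lemma~\ref{lem:list-after} for the pointer structure, together with the same presence-set identity for the inserted element.

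The step I expect to be the main obstacle is the nested case, where $o$ writes strictly inside the subtree that $o_d$ recursively clears --- the resurrection scenario of Figures~\ref{fig:map-remove} and~\ref{fig:todo-item}. Here the presence set returned at each level of \textsc{Clear-Map}/\textsc{Clear-List}/\textsc{Clear-Elem} is built from the presence sets of its children, so I must show that this propagation yields the same value whether or not $o$'s contribution was already present when $o_d$ ran. Making the bookkeeping precise turns on an invariant that presence sets stay consistent with the content they summarise (for instance, that the presence set of a register key equals the domain of its entries); without such an invariant the naive set identities leave a spurious residual term $R \cap o.\mathit{deps} \setminus \mathit{deps}_d$ that is in fact always absorbed by the prior presence set. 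I would therefore state and maintain this consistency invariant, then complete the nested case by induction on the depth of the cleared subtree, reducing each level to the one-level identities above.
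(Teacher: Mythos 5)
Your proposal follows the same route as the paper's own proof: split on whether the concurrent operation $o_c$ targets a disjoint subtree or lies inside the subtree cleared by $o_d$, use concurrency to obtain $o_c.\mathit{id} \notin o_d.\mathit{deps}$, and reduce commutativity to the interchange of set union and set difference over disjoint elements (plus the observation that deletion never removes keys or rewrites \textsf{next} pointers). Where you differ is the nested case, and the difference is to your credit. The paper disposes of that case with an informal assertion—clearing ``leaves any occurrences of $o_c.\mathit{id}$ undisturbed,'' and in the reverse order $o_c$ ``applies in the same way as if $o_d$ had not been applied previously, because \dots the rules for applying an operation are not conditional on the previous content of presence sets and registers''—which glosses over exactly the point you isolate: \textsc{Clear-Elem} writes back $\mathit{pres}_1 \cup \mathit{pres}_2 \setminus \mathit{deps}$ where $\mathit{pres}_1$ is recomputed from the (possibly already-modified) nested content, so the two application orders yield syntactically different expressions. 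Concretely, when $o_c$ is an assignment whose own clearing step removes register entries in $o_c.\mathit{deps}$, the register-derived contributions to the presence sets differ between the two orders by precisely your residual term $\mathrm{dom}(R) \cap o_c.\mathit{deps} \setminus o_d.\mathit{deps}$, and the final states agree only because that term is absorbed by the prior presence set minus $o_d.\mathit{deps}$—which requires your invariant that a presence set contains the operation IDs of the content beneath it (established by $\textsc{Add-ID}_1$ and re-established by \textsc{Clear-Elem}). So your plan is not merely correct; the consistency invariant plus induction on subtree depth is what a fully rigorous version of the paper's argument needs, and the paper silently assumes it. One point to tighten: your case split, like the paper's, should explicitly cover the mirror configuration in which $o_c$ is itself a deletion or an empty-map/empty-list assignment whose cleared subtree strictly contains $o_d$'s target; the same identities close it with the roles swapped, but it falls under neither your ``unrelated subtree'' case nor your ``at or strictly inside'' case as stated.
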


\begin{proof}
Given a deletion operation $o_d$ and any other concurrent operation $o_c$, we must show that the document state after applying both operations does not depend on the order in which $o_d$ and $o_c$ were applied.

The rules in Figure~\ref{fig:clear-rules} define how a deletion operation $o_d$ is applied: starting at the cursor in the operation, they recursively descend the subtree, removing $o_d.\mathit{deps}$ from the presence set $\mathsf{pres}(k)$ at all branch nodes in the subtree, and updating all registers to remove any values written by operations in $o_d.\mathit{deps}$.

If $o_c$ is an assignment or insertion operation, the \textsc{Assign} rule adds $o_c.\mathit{id}$ to the mapping from operation ID to value for a register, and the \textsc{Descend}, \textsc{Assign}, \textsc{Empty-Map} and \textsc{Empty-List} rules add $o_c.\mathit{id}$ to the presence sets $\mathsf{pres}(k)$ along the path through the document tree described by the cursor.

If $o_d.\mathit{cur}$ is not a prefix of $o_c.\mathit{cur}$, the operations affect disjoint subtrees of the document, and so they are trivially commutative. Any state changes by \textsc{Descend} and $\textsc{Add-ID}_1$ along the shared part of the cursor path are applied using the set union operator $\cup$, which is commutative.

Now consider the case where $o_d.\mathit{cur}$ is a prefix of $o_c.\mathit{cur}$. Since $o_c$ is concurrent with $o_d$, we know that $o_c.\mathit{id} \notin o_d.\mathit{deps}$. Therefore, if $o_c$ is applied before $o_d$ in the history, the \textsc{Clear-*} rules evaluating $o_d$ will leave any occurrences of $o_c.\mathit{id}$ in the document state undisturbed, while removing any occurrences of operations in $o_d.\mathit{deps}$.

If $o_d$ is applied before $o_c$, the effect on presence sets and registers is the same as if they had been applied in the reverse order. Moreover, $o_c$ applies in the same way as if $o_d$ had not been applied previously, because applying a deletion only modifies presence sets and registers, without actually removing map keys or list elements, and because the rules for applying an operation are not conditional on the previous content of presence sets and registers.

Thus, the effect of applying $o_c$ before $o_d$ is the same as applying $o_d$ before $o_c$, so the operations commute.
\end{proof}

\begin{lemma}\label{lem:assign-commute}
In a history $H$, an assignment operation is commutative with respect to concurrent operations.
\end{lemma}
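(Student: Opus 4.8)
The plan is to mirror the structure of the proof of Lemma~\ref{lem:delete-commute}, exploiting the fact that the three assignment rules \textsc{Assign}, \textsc{Empty-Map} and \textsc{Empty-List} all share the same shape: they first \emph{clear} the prior state at the cursor using exactly the same \textsc{Clear-*} sub-rules as a deletion, then add the operation's own identifier to the relevant presence set via $\textsc{Add-ID}_1$, and finally write the new value into a slot keyed by the unique operation ID $o_a.\mathit{id}$ (either as a register entry $\mathit{id} \mapsto \mathit{val}$, or as a freshly created empty \textsf{mapT} or \textsf{listT} child). Let $o_a$ be the assignment and $o_c$ a concurrent operation. First I would dispose of the case where $o_c$ is a deletion: this is already covered by Lemma~\ref{lem:delete-commute}, which shows a deletion commutes with every concurrent operation, in particular with $o_a$. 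It therefore remains to treat the cases where $o_c$ is itself an assignment or an insertion.

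Second, I would split on the relationship between the cursors $o_a.\mathit{cur}$ and $o_c.\mathit{cur}$. If neither is a prefix of the other, the two operations act on disjoint subtrees; the only shared effect is the addition of their IDs to the presence sets along the common portion of the cursor path, performed with the commutative set-union operator in \textsc{Descend} and $\textsc{Add-ID}_1$, so the operations commute. If the cursors are comparable, the decisive observation, exactly as in Lemma~\ref{lem:delete-commute}, is that clearing only ever modifies presence sets and register contents, and only ever removes identifiers contained in the clearing operation's $\mathit{deps}$; it never deletes map keys, list elements, or \textsf{next} pointers. Because $o_a$ and $o_c$ are concurrent we have $o_c.\mathit{id} \notin o_a.\mathit{deps}$ and $o_a.\mathit{id} \notin o_c.\mathit{deps}$, so whichever operation is applied second leaves the other's contribution undisturbed. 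Moreover, since the application rules are not conditional on the prior contents of presence sets or registers, and since each operation writes its value into a slot keyed by its own unique ID, the value written by the second operation lands in a fresh slot regardless of order. This covers the scenario of Figure~\ref{fig:map-remove}, in which a higher-level blanking is concurrent with a lower-level assignment.

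Third, for the remaining sub-case $o_a.\mathit{cur} = o_c.\mathit{cur}$ I would separate by the datatype tag. Two concurrent assignments of differently-typed values land under distinct tags \textsf{mapT}, \textsf{listT} or \textsf{regT}, so they occupy separate namespaces and are both preserved, as in Figure~\ref{fig:type-clash}; two register assignments leave both values keyed by their distinct IDs, giving the multi-value register of Figure~\ref{fig:register-assign}. Note that an empty-map or empty-list assignment clears \emph{all three} type variants of the key via \textsc{Clear-Any}, but since clearing is nondestructive in the sense above, a concurrently created value of a different type, or an element inserted concurrently into a list at the same key, still survives. I expect the main obstacle to be precisely this last interaction, between a clearing assignment and a concurrent insertion into the same list: I must verify that the empty-list child reconstructed by \textsc{Empty-List} preserves the linked-list structure produced by the insertion. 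This follows from the fact, established in the proof of Lemma~\ref{lem:list-after}, that only $\textsc{Insert}_1$ alters \textsf{next} pointers, so the \textsc{Child-Get} step of \textsc{Empty-List} recovers the existing list structure verbatim, while the inserted element retains the insertion's own ID in its presence set (which is not in $o_a.\mathit{deps}$). Combined with the ID-keyed disjointness of the written slots, this yields order-independence in every remaining case.
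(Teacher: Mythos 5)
Your proof is correct and follows essentially the same route as the paper's: delegate the clearing phase to Lemma~\ref{lem:delete-commute}, note that all presence-set updates are set unions and hence commutative, and use the uniqueness of Lamport timestamps so that the value writes of concurrent operations land in disjoint slots of a partial function. Your explicit analysis of the \textsc{Empty-List}-versus-concurrent-insertion interaction is in fact more thorough than the paper's own proof, which disposes of non-assignment concurrent operations with the brief remark that they are ``independent and thus trivially commutative.''
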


\begin{proof}
Given an assignment $o_a$ and any other concurrent operation $o_c$, we must show that the document state after applying both operations does not depend on the order in which $o_a$ and $o_c$ were applied.

The rules \textsc{Assign}, \textsc{Empty-Map} and \textsc{Empty-List} define how an assignment operation $o_a$ is applied, depending on the value being assigned. All three rules first clear any causally prior state from the cursor at which the assignment is occurring; by Lemma~\ref{lem:delete-commute}, this clearing operation is commutative with concurrent operations, and leaves updates by concurrent operations untouched.

The rules also add $o_a.\mathit{id}$ to the presence set identified by the cursor, and \textsc{Descend} adds $o_a.\mathit{id}$ to the presence sets on the path from the root of the document tree described by the cursor. These state changes are applied using the set union operator $\cup$, which is commutative.

Finally, in the case where value assigned by $o_a$ is a primitive and the \textsc{Assign} rule applies, the mapping from operation ID to value is added to the register by the expression $\mathit{child}[\,\mathit{id} \mapsto \mathit{val}\,]$. If $o_c$ is not an assignment operation or if $o_a.\mathit{cursor} \not= o_c.\mathit{cursor}$, the operations are independent and thus trivially commutative.

If $o_a$ and $o_c$ are assignments to the same cursor, we use the commutativity of updates to a partial function: $\mathit{child}[\,\mathit{id}_1 \mapsto \mathit{val}_1\,]\,[\,\mathit{id}_2 \mapsto \mathit{val}_2\,] = \mathit{child}[\,\mathit{id}_2 \mapsto \mathit{val}_2\,]\,[\,\mathit{id}_1 \mapsto \mathit{val}_1\,]$ provided that $\mathit{id}_1 \not= \mathit{id}_2$. Since operation IDs (Lamport timestamps) are unique, two concurrent assignments add two different keys to the mapping, and their order is immaterial.

Thus, all parts of the process of applying $o_a$ have the same effect on the document state, regardless of whether $o_c$ is applied before or after $o_a$, so the operations commute.
\end{proof}

\begin{lemma}\label{lem:op-commute}
Given an operation history $H=\langle o_1, \dots, o_n \rangle$ from a valid execution, a new operation $o_{n+1}$ from that execution can be inserted at any point in $H$ after $o_{n+1}.\mathit{deps}$ have been applied. For all histories $H'$ that can be constructed this way, the document state resulting from applying the operations in $H'$ in order is the same, and independent of the ordering of any concurrent operations in $H$.
\end{lemma}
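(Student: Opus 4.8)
The plan is to reduce the entire statement to the pairwise commutativity results already established in Lemmas~\ref{lem:insert-reorder}, \ref{lem:delete-commute} and \ref{lem:assign-commute}, and then to lift pairwise commutativity to position-independence by an adjacent-transposition argument. First I would reuse the prefix/suffix decomposition introduced in the proof of the convergence theorem: split $H$ at the shortest prefix $\langle o_1,\dots,o_j\rangle$ whose identifiers contain $o_{n+1}.\mathit{deps}$, so that every operation in the suffix $\langle o_{j+1},\dots,o_n\rangle$ is concurrent with $o_{n+1}$. Because $\mathit{deps}$ is the full, transitively closed set of causal dependencies, no suffix operation causally precedes $o_{n+1}$; and since $o_{n+1}$ is newly inserted into a history that did not contain it, no operation of $H$ causally follows $o_{n+1}$ either. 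Hence the causally ready insertion points are exactly those at or after the prefix/suffix boundary, and the position at the very end is always ready because $o_{n+1}.\mathit{deps}\subseteq\{o_1.\mathit{id},\dots,o_n.\mathit{id}\}$.

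Next I would show that it suffices to prove $o_{n+1}$ commutes with each individual suffix operation. Any two causally ready insertion points differ only in how many suffix operations lie to the left of $o_{n+1}$, so one ordering can be transformed into the other by a sequence of transpositions, each swapping $o_{n+1}$ with an adjacent concurrent suffix operation. The common prefix preceding such a swap is identical in both orderings, and the common tail following it is applied identically afterwards, so each transposition preserves the final document state exactly when the swapped pair commutes from the intermediate state reached after that prefix. All intermediate orderings remain causally valid, since swapping $o_{n+1}$ past a concurrent operation violates no dependency and leaves the relative order of the suffix operations unchanged.

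The remaining work is a case analysis on the mutation carried by $o_{n+1}$, using that commutativity is a symmetric relation between two operations. If $o_{n+1}$ is a deletion, Lemma~\ref{lem:delete-commute} gives commutativity with every concurrent operation; if it is an assignment, Lemma~\ref{lem:assign-commute} does the same. If $o_{n+1}$ is an insertion, a concurrent deletion or assignment commutes with it by the symmetry of Lemmas~\ref{lem:delete-commute} and \ref{lem:assign-commute}, and a concurrent insertion into the \emph{same} list commutes by Lemma~\ref{lem:insert-reorder}. The one case not directly covered by an earlier lemma is a concurrent insertion into a \emph{different} list; here I would argue, as in the proofs of the preceding lemmas, that the two insertions modify disjoint chains of $\mathsf{next}$ pointers while any updates to the presence sets $\mathsf{pres}(k)$ along a shared cursor prefix are combined by the commutative union operator $\cup$, so the order is again immaterial. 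With every pair $(o_{n+1}, o_c)$ of concurrent operations shown to commute, the transposition argument yields that all causally ready insertions of $o_{n+1}$ into a single $H$ produce the same state.

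I expect the main obstacle to be ensuring that the pairwise commutativity results apply from the arbitrary intermediate state reached after the shared prefix, rather than only from a pristine common ancestor, since the \textsc{Insert} and \textsc{Clear-*} rules are stateful and path-dependent. I would verify that the proofs of Lemmas~\ref{lem:insert-reorder}--\ref{lem:assign-commute} reason relative to such intermediate states — for instance that Lemma~\ref{lem:insert-between} pins down where $o_{n+1}$ can land irrespective of which other concurrent operations have already been applied. Finally, to connect this single-history result to the two-history claim that the convergence theorem requires, I would insert $o_{n+1}$ at the end of both $H_p$ and $H_q$: the induction hypothesis makes these histories produce identical states, and applying $o_{n+1}$ to that common state yields a common result, so the extended histories $H_p'$ and $H_q'$ agree as well.
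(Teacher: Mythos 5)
Your proposal is correct and follows essentially the same route as the paper: the same prefix/suffix split, the same case analysis on the mutation type of $o_{n+1}$, and the same reduction to Lemmas~\ref{lem:insert-reorder}, \ref{lem:delete-commute} and \ref{lem:assign-commute}, including the same brief independence argument for concurrent insertions into different lists. Your additions --- the explicit adjacent-transposition argument and the attention to commutativity holding from arbitrary intermediate states --- are careful elaborations of steps the paper leaves implicit, not a different approach.
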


\begin{proof}
$H$ can be split into a prefix and a suffix, as described in the proof of Theorem~\ref{thm:convergence}. The suffix contains only operations that are concurrent with $o_{n+1}$, and we allow $o_{n+1}$ to be inserted at any point after the prefix. We then prove the lemma case-by-case, depending on the type of mutation in $o_{n+1}$. 

If $o_{n+1}$ is a deletion, by Lemma~\ref{lem:delete-commute} it is commutative with all operations in the suffix, and so $o_{n+1}$ can be inserted at any point within, before, or after the suffix without changing its effect on the final document state. Similarly, if $o_{n+1}$ is an assignment, by Lemma~\ref{lem:assign-commute} it is commutative with all operations in the suffix.

If $o_{n+1}$ is an insertion, let $o_c$ be any operation in the suffix, and consider the cases of $o_{n+1}$ being inserted before and after $o_c$ in the history. If $o_c$ is a deletion or assignment, it is commutative with $o_{n+1}$ by Lemma~\ref{lem:delete-commute} or Lemma~\ref{lem:assign-commute} respectively. If $o_c$ is an insertion into the same list as $o_{n+1}$, then by Lemma~\ref{lem:insert-reorder} the operations are commutative. If $o_c$ is an insertion into a different list in the document, its effect is independent from $o_{n+1}$ and so the two operations can be applied in any order.

Thus, $o_{n+1}$ is commutative with respect to any concurrent operation in $H$. Therefore, $o_{n+1}$ can be inserted into $H$ at any point after its causal dependencies, and the effect on the final document state is independent of the position at which the operation is inserted.
\end{proof}

This completes the induction step in the proof of Theorem~\ref{thm:convergence}, and thus proves convergence of our datatype.

\fi 
\end{document}